\documentclass[conference, thm-restate]{IEEEtran}
\IEEEoverridecommandlockouts

\usepackage{cite}
\usepackage{acronym}
\usepackage{graphicx}
\usepackage[utf8]{inputenc}
\usepackage{csquotes}
\usepackage{caption,subcaption}
\usepackage{textcomp}
\usepackage{xcolor}

\def\BibTeX{{\rm B\kern-.05em{\sc i\kern-.025em b}\kern-.08em
    T\kern-.1667em\lower.7ex\hbox{E}\kern-.125emX}}

\usepackage{siunitx}
\usepackage{amsmath}
\usepackage{amsfonts}
\usepackage{graphicx}
\usepackage{upgreek}
\usepackage{amsthm}
\usepackage{thmtools}
\usepackage{thm-restate}
\usepackage{booktabs}
\usepackage{amssymb}
\usepackage{mathtools}
\usepackage{pdflscape}
\usepackage{xspace}
\usepackage[abbreviations]{foreign}
\usepackage{makecell}

\usepackage{mdframed} 
\usepackage{lipsum}

\newmdtheoremenv{theo}{Theorem}

\usepackage{xcolor}

\usepackage{algorithm} 
\usepackage{algpseudocode}

\usepackage{hyperref}

\usepackage{multicol}

\usepackage{mathtools}

\usepackage{cite}

\usepackage{enumitem}
\setlist[enumerate]{itemsep=0mm}

\newtheorem{lemma}{Lemma}

\newtheorem{theorem}{Theorem}


\theoremstyle{definition}
    
    \newtheorem{definition}{Definition}

\theoremstyle{plain}

\newtheorem{remark}{Remark}


\DeclareMathOperator{\swap}{Y}

\DeclareMathOperator{\sample}{b}
\DeclareMathOperator{\e}{\mathbf{exp}}
\DeclareMathOperator{\eh}{\hat{e}}
\DeclareMathOperator{\ih}{\hat{i}}
\DeclareMathOperator{\jh}{\hat{j}}

\renewcommand{\Pr}{\mathbb P}


\usepackage{ifthen}
\newboolean{showcomments}
\setboolean{showcomments}{true}
\ifthenelse{\boolean{showcomments}}
{
  
	}

\algrenewcommand\algorithmicwhile{\textbf{upon}}
\acrodef{P2P}{peer-to-peer}

\begin{document}

\title{\sys: A Peer-Sampler with Randomness Guarantees
}

\newcommand{\sys}{\textsc{PeerSwap}\xspace}

\author{\IEEEauthorblockN{Rachid Guerraoui\IEEEauthorrefmark{1}, Anne-Marie Kermarrec\IEEEauthorrefmark{1}, Anastasiia Kucherenko\IEEEauthorrefmark{1}, Rafael Pinot\IEEEauthorrefmark{2}, Martijn de Vos\IEEEauthorrefmark{1}}
\IEEEauthorblockA{\IEEEauthorrefmark{1} \textit{Ecole Polytechnique Fédérale de Lausanne (EPFL), Lausanne, Switzerland } \\
\{rachid.guerraoui, anne-marie.kermarrec, anastasiia.kucherenko, martijn.devos\}@epfl.ch\\
\IEEEauthorrefmark{2} \textit{Sorbonne Université, Paris, France } \\
pinot@lpsm.paris}
}

\maketitle

\begin{abstract}
The ability of a peer-to-peer (P2P) system to effectively host decentralized applications often relies on the availability of a peer-sampling service, which provides each participant with a random sample of other peers.
Despite the practical effectiveness of existing peer samplers, their ability to produce random samples within a reasonable time frame remains poorly understood from a theoretical standpoint.
This paper contributes to bridging this gap by introducing \sys, a peer-sampling protocol with provable randomness guarantees.
We establish execution time bounds for \sys, demonstrating its ability to scale effectively with the network size.
We prove that \sys maintains the fixed structure of the communication graph while allowing sequential peer position swaps within this graph.
We do so by showing that \sys is a specific instance of an interchange process, a renowned model for particle movement analysis.
Leveraging this mapping, we derive execution time bounds, expressed as a function of the network size $n$.
Depending on the network structure, this time can be as low as a polylogarithmic function of $n$, highlighting the efficiency of \sys.
We implement \sys and conduct numerical evaluations using regular graphs with varying connectivity and containing up to \num{32768} ($2^{15}$) peers.
Our evaluation demonstrates that \sys quickly provides peers with uniform random samples of other peers.

\end{abstract}

\begin{IEEEkeywords}
Gossip-based Peer Sampling, Convergence of Peer Sampling, Interchange Process
\end{IEEEkeywords}

\section{Introduction}
\Acf{P2P} systems are decentralized networks in which users, or \emph{peers}, communicate only directly with each other based on peer identifiers, \eg, network addresses or public keys~\cite{steinmetz2005peer}.
Many of the applications executed on a P2P system rely on so-called \emph{peer sampling services} that provide each peer with a random sample of other peers~\cite{jesi2007identifying}.
Such applications include computing aggregate functions~\cite{kempe2003gossip},  information dissemination~\cite{karp2000randomized,baldoni2007tera}, network size counting~\cite{massoulie2006peer,chatterjee2022byzantine}, consensus~\cite{bano2019sok}, clock synchronization~\cite{baldoni2009coupling}, constructing and maintaining overlay networks~\cite{voulgaris2005epidemic, jelasity2009t} and decentralized learning~\cite{devos2023epidemic}.

P2P applications critically depend on the capability of peer-sampling services to offer samples randomly and uniformly selected from the overall pool of peers.
This randomness is, for example, essential for facilitating rapid convergence in averaging tasks, improving model accuracy in decentralized learning, and faster propagation of information through the network.
Depending on the application, even a small bias in the randomness of the peer sampling service can lead to significant performance degradation at the application layer, \eg, resulting in uneven load balancing between peers, delays in information propagation, or suboptimal model convergence in decentralized learning~\cite{stutzbach2006unbiased}.

Over the past two decades, various methods to construct peer sampling services have been developed, falling into two main categories: random-walk based and gossip-based protocols.
Random-walk based peer samplers, \eg,~\cite{zhong2008convergence, bar2008rawms, gkantsidis2004random, massoulie2006peer, law2003distributed, sevilla2015node}, were used in early P2P systems like  Gnutella~\cite{ripeanu2001peer} and, more recently, in Bitcoin~\cite{nakamoto2008bitcoin}.
Essentially, the idea is to execute a random walk through the network and use the final peer encountered in this walk as the random sample.
However, random walk-based peer samplers have two main limitations: \textit{(i)} they incur a high network overhead if many samples are required~\cite{jelasity2007gossip}, and \textit{(ii)} they tend to favor nodes with high in-degrees, which compromises their ability to provide a uniform sample~\cite{massoulie2006peer}. 

Gossip-based peer samplers, \eg,~\cite{voulgaris2005cyclon,jelasity2007gossip,tolgyesi2009adaptive}, have significantly evolved in recent years, successfully overcoming the aforementioned limitations. 
The essence of the gossip-based approach involves each peer maintaining a small subset of other peers, known as a \emph{neighborhood}.
Periodically, each peer exchanges a subset of its neighborhood with another peer.
The simplicity and rapid provision of new peer samples have driven the popularity of gossip-based peer samplers.
Several works extend gossip-based peer samplers to support peers joining and leaving the network (churn)~\cite{allavena2005correctness}, or to handle the presence of malicious peers~\cite{auvolat2023basalt, antonov2023securecyclon, pigaglio2022raptee, badishi2006exposing, johansen2006fireflies, bortnikov2008brahms}.

In a gossip-based peer sampler, each peer's neighborhood strongly correlates with its initial neighborhood during the first few exchange rounds.
As the neighborhood of each peer undergoes regular updates, the distribution of potential peers in a fixed peer's neighborhood approaches uniformity, eventually providing an almost ideally random sample from the entire network.
This \emph{convergence to uniformity} is crucial, enabling P2P applications to rely on the randomness of peer samples.
Empirical evidence conveys convergence to almost uniform samples within a small, often logarithmic number of rounds in terms of the total number of peers~\cite{kermarrec2011converging, voulgaris2005cyclon,jelasity2007gossip}.

Despite their excellent practical performance, existing peer-sampling services lack a comprehensive theoretical analysis of their randomness guarantees and convergence time.
While some progress has been made in proving the  \emph{eventual uniformity} of samples~\cite{busnel2011uniformity, bortnikov2008brahms, kermarrec2011converging}, determining the time required for peer samplers to achieve uniformly random samples remains an open challenge.
Without clearly defining the convergence time, \ie, the time needed to provide a random sample to each peer, the exact performance and guarantees of peer sampling services remains unknown. 

Analyzing the randomness of peer samplers is a challenging endeavor, primarily due to the absence of explicit theoretical tools to analyze peers’ neighborhoods in dynamic networks.
One approach is to demonstrate the convergence of the entire network to a random graph, thereby asserting the randomness of each neighborhood.
However, this method lacks reasonable guarantees regarding the convergence time.
For instance, while eventual convergence to a random graph has been demonstrated in~\cite{kermarrec2011converging}, specific time guarantees are absent. 
A similar approach is also explored in the flip process~\cite{mahlmann2005peer, feder2006local}, a substantial area in graph theory.
The flip process is equivalent to a gossip-based peer sampler where peers exchange exactly one neighbor at each step.
The best-known result for the convergence time needed for the network topology to resemble a random graph is a large polynomial, around $O(n^{16})$ where $n$ is the number of peers~\cite{10.1145/1582716.1582742}.
This is an impractical result in large-scale networks where this convergence time will be disproportionally large.
Unfortunately, a specific analysis of neighborhood characteristics instead of the network graph is difficult because of complicated network dynamics.

\textbf{Our contributions.}
This paper is a first step towards bridging the gap between practical advancements and theoretical guarantees of gossip-based peer samplers.
Concretely, we make the following contributions:
\begin{enumerate}

    \item We introduce \sys(Section~\ref{section:algorithm}), a gossip-based peer-sampling service that \emph{provably and efficiently provides a random sample for each peer}.
    Similarly to other gossip-based peer samplers,    \sys has peers randomly contacting each other and exchanging neighbors.
    In contrast to existing approaches, 
    \sys \textit{(i)} has peers exchanging their entire neighborhoods instead of subsets; \textit{(ii)} maintains bidirectional connections for all peers; \textit{(iii)} uses Poisson clocks on each connection to schedule regular neighborhood exchanges (every Poisson clocks periodically rings after random time intervals to trigger a neighborhood exchange).
    
    \item We prove that the design of \sys provides an important property throughout the execution, namely that the overall \emph{connections network maintains the same underlying pattern as the initial one} despite changes in neighborhoods and connections (Section~\ref{section:interchange}). This property serves a dual purpose. Firstly, it guarantees that any network property satisfied at the start of the algorithm is preserved throughout its execution, \eg, the network connectivity, expansion value, degree distributions, and diameter.
    This can be essential for applications whose performance critically relies on the communication patterns between peers, like decentralized learning~\cite{devos2023epidemic}.
    Secondly, combined with the behavior of Poisson clocks, this property enables a comprehensive analysis of the convergence time of \sys.
    
    \item By establishing a link between the dynamics of peers in \sys and an interchange process, an important and extensively studied interacting particle system~\cite{lazarescu2015physicist, liggett1999stochastic, wood2020combinatorial}, we derive the convergence time of \sys (Section~\ref{section:mixing-time}). The upper bound we deduce depends solely on the network size ($n$) and its connectivity (measured by the spectral gap). We show that in sufficiently connected networks \emph{the convergence time is as low as polylogarithmic in $n$}.

    \item We design a variant of the \sys algorithm that temporarily locks peers involved in a neighborhood exchange. This variant can be deployed in practical settings with network delays (Section~\ref{sec:peerswap_with_delays}).

    \item We conduct numerical evaluations of \sys using regular graphs with varying connectivity and with up to \num{32768} ($2^{15}$) peers (Section~\ref{section:experiments}).
    Our evaluation conveys the convergence of \sys over time and empirically demonstrates that \sys is an efficient peer sampler.
    We also analyze the convergence and performance of our practical variant of \sys.
\end{enumerate}

\section{System Model and Background}
\label{sec:system_model}
\textbf{Preliminaries on graph theory.} 
A directed graph $G$ consists of a set of nodes $V$ and edges $E$ where each edge $(i, j)$ goes from node $i\in V$ to node $j\in V$. When $(i, j)\in E$, we say that node $j$ is adjacent to node $i$. We denote by $A_G$ the adjacency matrix of the directed graph $G$. For each node $i\in V$, \emph{the neighborhood of $i$} denoted by $N(i)$ is a set of all nodes which are adjacent to $i$. The degree of $i$, sometimes called the out-degree and denoted by $\deg(i)$, is the size of its neighborhood, or $|N(i)|$. If all nodes have the same degree $d$, the graph is called $d-$regular. If, for any pair of nodes $i, j\in V$,  we have that $i\in N(j)$ if and only if $j \in N(i)$, then we say that \emph{the graph $G$ is undirected}. We use the previous notations similarly for undirected graphs.

An important measure of graph connectivity is its spectral gap. To define it, consider $D$, the $n\times n$ matrix with all elements equal to zero, except of diagonal $(\deg(i_1), \dots, \deg(i_n))_{i_k \in V}$. \emph{The spectral gap} of the graph $G$ is: \[\lambda(G)=\lambda(W_G)= 1 - \max(|\lambda_2(W_G)|, |\lambda_n(W_G)|),\] where $1 =\lambda_1(W_G)\geq\lambda_2(W_G)\geq\dots\geq \lambda_n(W_G)\geq -1$ are eigenvalues of $W_G=D^{-1/2}A_GD^{-1/2}$. \emph{A higher spectral gap $\lambda$ indicates a better-connected graph.}

\subsection{Peer-sampling}
\label{sec:peer-sampling_systems_model}
\textbf{Peers network.}
We consider a network with $n$ fixed peers $U = \{u_1, \dots, u_n\}$, where each peer is assigned a unique identifier.
We refer both to a peer $ i $ and its identifier as $u_i$. 
Peers are connected over a routed network infrastructure, which enables communication between any pair of them on the condition that the sender knows the identifier of the receiver. 
We assume that all peers have access to a global clock and that time is \emph{continuous}, allowing peers to send messages or take actions at any moment.
Additionally, there are no failures in the system -- all peers remain online during protocol execution.
These assumptions align with related theoretical work on peer sampling~\cite{sevilla2015node, kermarrec2011converging, massoulie2006peer, gkantsidis2004random}.

\textbf{Network Topology.}
To describe the network structure at time $t$, we consider a time-dependent directed graph $G(t)=(V, E(t))$. Here, the set of nodes $V=[1, 2, \dots, n]$ represents the fixed set of peers $U$ --- any node $i\in V$ corresponds to a peer $u_i\in U$, and there is a directed edge $(i, j)\in E(t)$ if and only if the peer $u_i$ knows the identifier of the peer $u_j$ at the time $t$. We denote by $N(i, t)$ the (ordered) neighborhood of $i$ in graph $G(t)$ for any $i\in V, t\geq 0$. Although peer samplers typically operate within directed networks, we show in Section~\ref{sec:interchange} that in our setting, $G(t)$ remains undirected over time.

\textbf{Peer Sampling Service.} A peer-sampling service with parameter $\sample$ on the set of peers $U$ is a decentralized communication protocol that, at any time $t\geq 0$, provides each peer $u_i\in U$ with a random ordered tuple of distinct peers $Sample_i( \sample, t) = \left(u_{i_1}, u_{i_2},\dots, u_{i_{\sample}}\right)\subseteq U\backslash \{u_i\}$ such that $\{i_1, i_2,\dots, i_{\sample}\}\in N(i, t)$. 

\emph{The objective of such a service} is to make the distribution of $Sample_i(\sample, t)$ gradually approach a uniform distribution.  Specifically, we want the random variable $Sample_i(\sample, t)$ to converge in law (as $t \rightarrow \infty$) to $Uniform((U)_{\sample})$, where for any finite set $S$ and any $\sample\leq |S|$ we denote by $(S)_{\sample}$ the set of all (ordered) $\sample$-tuples of distinct elements from $S$:
\[(S)_{\sample}= \{\mathbf{s} \in S^{\sample}: \mathbf{s}[i]\neq \mathbf{s}[j], \ \forall i, j\in [\sample], i\neq j\}.\]
Here and further the elements of $(S)_{\sample}$ are denoted by boldface letters such as $\mathbf{x}$, with $\mathbf{x}[i]$ denoting the $i$-th coordinate of $\mathbf{x}$.

\textbf{Timing Mechanism.}
Gossip-based peer samplers require a timing mechanism that controls when a peer will exchange its neighbors with another peer. For this purpose, we use \emph{Poisson clocks} that ring periodically after intervals that follow an exponential distribution.
We formally define this below.

A random variable $Z$ has exponential distribution $\mathcal{E}(\alpha)$ with rate $\alpha$ if its cumulative distribution function is: 
\begin{equation*}
    F(z, \alpha) = \Pr[Z\leq z]=
    \begin{cases}
        1- e^{-\alpha z},&\text{if $z\geq 0$,}\\
        0,&\text{if $z<0.$ }
    \end{cases}
\end{equation*}

An important property of an exponential random variable $Z$ is \emph{memorylessness}, \ie, its future behavior remains unaffected by the time that has passed:
\begin{equation*}    
    \Pr[Z>s+t|Z>s]=\Pr[Z>t], \forall s, t\geq 0.
\end{equation*}

\begin{definition}[Poisson clock]\label{def:pois_clock} 
A Poisson clock with rate $\alpha$ is a sequence of random variables $(T_n)_{n\geq0}$ such that $T_0=0$ and for $n\geq 1$, we have $T_n = \sum^n_{k=1}Z_k$, where $(Z_k)_{k\geq1}$ is a sequence of independent exponential variables (i.e., $Z_1,\dots,Z_n\stackrel{i.i.d.}{\sim} \mathcal{E}(\alpha)$). The realizations of $(T_n)_{n\geq 0}$ are called \emph{the ring times} of the Poisson clock.
\end{definition}

\subsection{Continuous-time Markov Chains}
In this work, we propose a peer-sampling process whose future is affected solely by its current state and is independent of its history or the exact value of current time $t$. This property of the process is called Markov property, that we formally outline next.

Consider  a family of random variables $X=(X_t)_{t\geq 0}$ taking values in a finite state space $S$. $X$ is a \emph{continuous-time Markov chain (ctMC)} if it satisfies the Markov Property:
\begin{align*}
    \Pr[X_{t_n}=i_n\mid X_{t_1}=i_1, \dots, X_{t_{n-1}}=i_{n-1}]=\\\Pr[X_{t_n}=i_n\mid X_{t_{n-1}}=i_{n-1}].
\end{align*}
for all $i_1, \dots,i_n\in S$ and any sequence $0\leq t_1\leq t_2\leq \dots\leq t_n$ of times. 
The process is \emph{time-homogeneous} if the conditional probability does not depend on the current time, but only on the time interval between observations, \textit{i.e.}, 
\begin{align}\label{eq:transition-pr}
    \Pr[X_{t+h}=j\mid X_{h}=i]=\Pr[X_{t}=j\mid X_{0}=i],\\ \text{ for any }i, j\in S,  h\geq 0.\nonumber
\end{align}

For time-homogeneous ctMC we denote the probability in \eqref{eq:transition-pr} as \emph{a transition probability} $P_{i, j}(t)$ and define \emph{the matrix of transition probabilities} at time $t$ as $P(t)=(P_{i, j}(t))_{i, j\in S}$. A ctMC is called \emph{irreducible} if for any states $i, j\in S$ there exists $t>0$ such that $P_{i, j}(t)>0$. In this paper, if not stated otherwise, we consider irreducible time-homogeneous ctMCs. 

\vspace{10pt}

The transition probability can be used to completely characterize  the ctMC, but it gives a lot of information, which is not always needed. An alternative (and more succinct) way of characterizing the dynamic of a ctMC is the \emph{infinitesimal generator}~\cite{levin2017markov}[Chapter 20.1]. 
\begin{definition}[Infinitesimal generator]\label{def:infinitesimal}
    Let  $X=(X_t)_{t\geq 0}$ be a time-homogeneous ctMC with finite set of states $S$ and transition probabilities $P(t)$. The infinitesimal generator of $X$ is the matrix of size $|S|\times |S|$ defined as:
    \[Q=\lim\limits_{h\rightarrow 0+}\frac{P(h)-I}{h},\]
    where $I$ is the identity matrix of size $|S|\times |S|$. We denote entries of $Q$ as $q(i,j)$ for $\forall i, j\in S$. By construction of $Q$ we always have $\sum\limits_{j \in S}q(i, j)=0, \forall i\in S$.
\end{definition}

\begin{definition}\label{def:TV}
 The \emph{total variation distance} $d_{TV}$ between two distributions\footnote{We use a slight abuse of notation by interchangeably referring to a discrete distribution and the vector representing its probabilities.} $\mu, \nu$ on the same finite set of states $S$ is given by:
\begin{equation}
    d_{TV}(\mu, \nu) = 1/2\sum\limits_{s\in S}\mid \mu_s-\nu_s \mid,
\end{equation}
where $\mu_s$ (resp. $\nu_s$) denotes the probability of observing $s$ according to $\mu$ (resp. $\nu$).
\end{definition}

\begin{theorem}[Theorem 20.1 from~\cite{levin2017markov}]\label{thm:stationary}
    Consider an irreducible ctMC $X$ on a finite set of states $S$  with matrix of transition probabilities $P(t)$. Then there exists a unique distribution $\pi=(\pi_j)_{j\in S}$ such that $\pi P(t) = \pi$ for all $t\geq0$, and 
    \[\max_{ i\in S} d_{TV}(P_{i, \cdot}(t), \pi )\xrightarrow{t\rightarrow \infty}0,\]
    where  $P_{i, \cdot}(t)=(P_{i, j}(t))_{j\in S}$.
\end{theorem}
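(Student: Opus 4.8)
The plan is to reduce this continuous-time statement to the corresponding (known) convergence theorem for discrete-time chains by sampling the process at a fixed time step. Two ingredients make this work. First, the transition matrices satisfy the semigroup (Chapman--Kolmogorov) identity $P(t+s)=P(t)P(s)$ for all $s,t\geq 0$, and each $P(t)$ is a stochastic matrix. Second --- and this is the technical heart --- irreducibility of the ctMC upgrades to the statement that $P_{i,j}(t)>0$ for \emph{every} $t>0$ and \emph{every} pair $i,j\in S$ (a Lévy-type dichotomy: for a fixed pair, $P_{i,j}(\cdot)$ is either identically zero or strictly positive on $(0,\infty)$). I would prove this by first noting $P_{i,i}(t)\geq e^{q(i,i)t}>0$ for all $t\geq 0$ (Kolmogorov's backward equation plus Grönwall, using $q(i,i)\leq 0$), then using irreducibility to get some $s_{ij}$ with $P_{i,j}(s_{ij})>0$ and propagating positivity forward in time via $P_{i,j}(t)\geq P_{i,i}(t-s_{ij})\,P_{i,j}(s_{ij})>0$ for $t\geq s_{ij}$, with a short continuity/semigroup argument to cover small $t$. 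An equivalent and perhaps cleaner route is uniformization: fix $\theta\geq \max_i|q(i,i)|$ and write $P(t)=e^{-\theta t}\sum_{k\geq 0}\frac{(\theta t)^k}{k!}K^{k}$ with $K=I+\theta^{-1}Q$ a stochastic matrix that is irreducible and, since $K_{ii}>0$, aperiodic; one then works with the single matrix $K$.

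Fix any $t_0>0$ and set $\widehat P\defeq P(t_0)$. By the positivity just discussed, $\widehat P$ is the transition matrix of an irreducible, aperiodic discrete-time Markov chain on the finite set $S$, so the Perron--Frobenius / discrete convergence theorem yields a unique probability vector $\pi$ with $\pi\widehat P=\pi$ and $\widehat P^{\,k}\to \mathbf 1\pi^{\top}$ entrywise (in fact geometrically) as $k\to\infty$. I then claim $\pi$ is stationary for the whole family: for any $s\geq 0$, the vector $\nu\defeq \pi P(s)$ satisfies $\nu\widehat P=\pi P(s)P(t_0)=\pi P(t_0)P(s)=\pi P(s)=\nu$, so $\nu$ is stationary for $\widehat P$; by uniqueness of the latter, $\pi P(s)=\pi$ for all $s\geq 0$. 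Uniqueness of $\pi$ among stationary distributions of the ctMC is then immediate, since any such distribution is in particular stationary for $\widehat P$.

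For the convergence claim I would use that the total variation distance is non-expansive under the action of a stochastic matrix, $d_{TV}(\mu P(s),\nu P(s))\leq d_{TV}(\mu,\nu)$ for all $s\geq 0$ (a one-line computation from Definition~\ref{def:TV}). Applying this with $\nu=\pi$, together with $\pi P(s)=\pi$ and $P_{i,\cdot}(t+s)=P_{i,\cdot}(t)P(s)$, shows that $t\mapsto d_{TV}(P_{i,\cdot}(t),\pi)$ is non-increasing. Along the subsequence $t=kt_0$ it tends to $0$ by the convergence of $\widehat P^{\,k}$ above, so taking $k=\lfloor t/t_0\rfloor$ and using monotonicity gives $\max_{i\in S} d_{TV}(P_{i,\cdot}(t),\pi)\to 0$ as $t\to\infty$.

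The main obstacle is the positivity statement $P_{i,j}(t)>0$ for all $t>0$: this is precisely where continuous time genuinely differs from the discrete setting, since no periodicity can occur, and it is what makes the sampled matrix $\widehat P$ automatically aperiodic so that the discrete theorem applies without extra hypotheses. Everything afterwards is a short transfer argument, and the contraction property of $d_{TV}$ is standard.
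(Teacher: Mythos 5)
This statement is not proved in the paper at all: it is imported verbatim as Theorem 20.1 of the cited Levin--Peres text, so there is no internal proof to compare against, and your sketch should be judged on its own. It is correct, and it is essentially the standard argument (and close to how the textbook itself proceeds): establish $P_{i,j}(t)>0$ for all $t>0$, sample at a fixed $t_0$ to get an irreducible aperiodic stochastic matrix $\widehat P=P(t_0)$, transfer existence and uniqueness of $\pi$ via the commutation $P(s)P(t_0)=P(t_0)P(s)$, and upgrade convergence along $t=kt_0$ to all $t$ using non-expansiveness of $d_{TV}$ under stochastic matrices. The one soft spot is your ``short continuity/semigroup argument to cover small $t$'' in the L\'evy-dichotomy route: continuity alone cannot give positivity near $0$, since $P_{i,j}(0)=0$ for $j\neq i$; you genuinely need a quantitative lower bound for small $t$, e.g.\ a fixed jump path $i\to k_1\to\dots\to j$ executed within $[0,t]$, or the uniformization series $P(t)=e^{-\theta t}\sum_{k\ge 0}\frac{(\theta t)^k}{k!}K^k$ with $K=I+\theta^{-1}Q$ irreducible and having positive diagonal, every term of which is nonnegative. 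Since you offer exactly that uniformization route as the ``cleaner'' alternative, the gap is cosmetic rather than substantive, and the rest of the transfer argument (uniqueness of $\pi$ for the whole semigroup, monotonicity of $t\mapsto d_{TV}(P_{i,\cdot}(t),\pi)$, finiteness of $S$ for the max) is sound.
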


 We call such $\pi$ \emph{a stationary distribution} of $X$ and define the $\varepsilon$\emph{-mixing time of a ctMC} $X$ as 
    \[T_X(\varepsilon) = \inf \left \{ t \geq 0 : \max_{ i\in S} d_{TV}(P_{i, \cdot}(t), \pi ) \leq \varepsilon \right \}. \]
The mixing time of ctMCs is a key component in our further derivations of the convergence time of our peer sampler.

\subsection{Auxiliary process of peers' movement on a fixed graph}
\label{sec:interchange}
Peer-sampling services keep the set of peers fixed and modify the connections between them.
In this way, graphs $G(t)$ represent the consecutive states of a given ``dynamic'' graph with fixed nodes and reconnecting edges.
Usually, processes on dynamic graphs are complicated to analyze.
Thus, in this paper, we map peer-sampling processes on dynamic graphs and an auxiliary so-called interchange process where the graph is \emph{fixed} while peers change positions on the graph's nodes.

\textbf{Interchange process.}
An \emph{interchange process} $IP(k, G)$ describes dynamics of $k$ peers on $G=(V, E)$.  The set of possible states for this chain is $(V)_k$, \ie all possible positions of $k$ peers on the nodes of $G$, where no two peers can be positioned on the same node. The transition between states happens by randomly choosing edges from $E$, and if an edge $e\in E$ is chosen, the peers (if any) positioned at the endpoints of $e$ are switched. If there is just one peer, it still switches its position to the other endpoint of the edge. We will implicitly assume that $G$ is connected, in which case interchane process is irreducible and thus admits a stationary distribution and mixing time~\cite{levin2017markov}[Example 1.12.].

Now, we formally describe an interchange process as per \cite{oliveira2013mixing}. We define a transposition function for any edge $e=(i, j)\in E$ and node $v\in V$ as
\begin{equation}\label{eq:transition_f-1}
    f_e(v) = \begin{cases}
j,&\text{if $v=i$,}\\
i,&\text{if $v=j$,}\\
v,&\text{otherwise. }\end{cases}
\end{equation}
We also write 
\begin{equation}\label{eq:transition_f-2}
    f_e(\mathbf{x})=\left ( f_e(\mathbf{x}[i]) \right)_{i\in[1, k]}\text{ for } \mathbf{x}\in (V)_k.
\end{equation}

\emph{An interchange process with parameter $k$} on $G=(V, E)$, or $IP(k, G)$ for short, is a ctMC with state space $(V)_k$ such that for any distinct $\mathbf{x}, \mathbf{y} \in (V)_k$:
\begin{equation}\label{eq:interchange_transition}
    q(\mathbf{x}, \mathbf{y}) = 
    \begin{cases}
        1,&\text{if $\exists e\in E$ s.t. $f_e(\mathbf{x})=\mathbf{y}$,}\\
        0,&\text{otherwise. }
    \end{cases}
\end{equation}

\begin{figure*}[ht]
\vspace{-0.3cm}
\includegraphics[width=\textwidth]{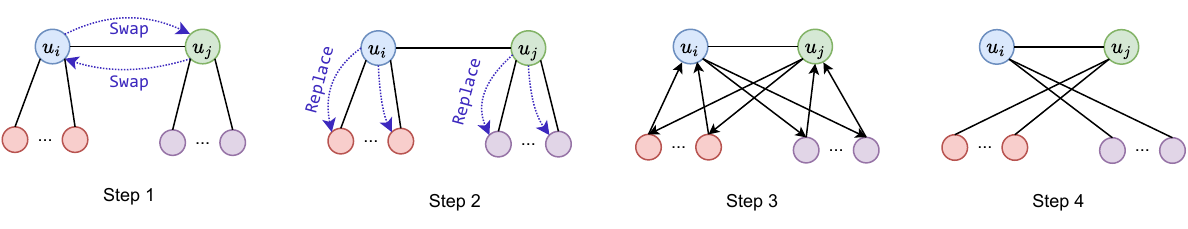}
\caption{A single swap during the \sys execution between a pair of adjacent peers $u_i$ and $u_j$. Here, blue doted lines indicate sent messages and black solid lines represent connections between peers. We indicate a directed edge with an arrow.}
\label{fig:swap_illustration}
\vspace{-0.3cm}
\end{figure*}

\begin{algorithm}[t]
\caption{The \sys algorithm from the perspective of peer $u_i$.}\label{alg:main}
\begin{algorithmic}[1]
\Require Undirected $G(0)=(V, E(0))$ describing users $U$; parameter $\sample$; Poisson clocks with rate $\alpha$ $C_e=C_i(j)=C_j(i)$ for $\forall e=(i, j)\in E(0)$.
\Ensure $Sample_i(\sample, t)\sim Uniform((U\backslash\{u_i\})_{\sample})$. 
\Statex
\While{some $C_e$ rings between $i$ and $j$ at time $t$}\label{alg:line:3}
\State Set \textsc{Swap$_i$} $\leftarrow ((u_{\kappa}, C_i(\kappa))_{\kappa\in N(i, t)})$
\State Send \textsc{Swap$_i$} to $u_j$ \Comment{Step 1}
\State Set \textsc{Replace$_j$} $\leftarrow u_j$
\State Send \textsc{Replace$_j$} request to all $(u_{\kappa})_{\kappa\in N(i, t)}$ \Comment{Step 2}
\EndWhile \label{alg:line:4}

\Statex
\While{receiving $\textsc{Swap}_j$ from $u_j$ at time $t$} \Comment{Step 3}
\State Erase $((u_{\kappa}, C_i(\kappa))_{\kappa\in N(i, t)})$
\State Store $((u_l, C_j(l))_{l\in N(j, t)})$ from $\textsc{Swap}_j$
\Statex\Comment{\emph{$G(t)$ changed, $N(i, t)\leftarrow N(j, t)$}}
\EndWhile

\Statex
\While{receiving $\textsc{Replace}_l$ from some $u_{\kappa}$} \Comment{Step 4}
    \State Store $u_l$ and set $C_i(l)\leftarrow C_i(\kappa)$
    \State Erase $(u_{\kappa}, C_i(\kappa))$
    \Statex\Comment{\emph{$G(t)$ changed: in $E(t)$ $(i, \kappa)$ is replaced with $(i, l)$}}
\EndWhile \label{alg:line:13}

\Statex

\Function{\textit{Sample}$_i$}{$\sample,t$}\label{alg:line:14}
\State At time $t$ select $\sample$ random peers from $(u_{\kappa})_{\kappa\in N(i, t)}$ 
\State Return a randomly ordered tuple of these peers
\EndFunction \label{alg:line:15}

\end{algorithmic}
\end{algorithm}

\section{Poisson-based random swap (\sys)}
\label{section:algorithm}

We now introduce the \sys protocol.
An execution of \sys starts from a set of peers $U$ connected in a $d$-regular\footnote{For presentation clarity, we focus on regular topologies. However, \sys and its analysis are also applicable to irregular graphs.} undirected graph $G(0)=(V, E(0))$, where any node $i\in V$ represents a user $u_i\in U$. 
\sys assigns an independent Poisson clock $C_e$ with rate $\alpha$ to each edge $e\in E(0)$.
By default, $\alpha =1$, and system designers can adjust $ \alpha $ to speed up or slow down swap rates based on application requirements (more details in Section~\ref{subsec:convergence_peerswap_with_network_delays}.
Each clock $C_e$ is shared between the peers at the endpoints of the edge $e \in E(0)$. 
If no Poisson clock rings, the graph $G(t)$ remains unchanged as time passes.
Whenever a Poisson clock on some edge $e=(i, j)$ rings at time $t$, peers $ u_i $ and $ u_j $ swap their position following the four steps visualized in Figure~\ref{fig:swap_illustration} and as described in Algorithm~\ref{alg:main}: 
\begin{itemize}
    \item \textbf{Step 1:} Peer $u_i$ (resp. $u_j$) sends a \textsc{Swap} message initiating the swap to $u_j$ (resp. $u_i$). This \textsc{Swap} message contains identifiers of all users $u_{\kappa}$ such that $\kappa\in N(i, t)$ (resp. $N(j, t)$), as well as all the Poisson clocks $u_i$ (resp. $u_j$) shares with its neighbors.
    \item \textbf{Step 2:} Peer $u_i$ sends a $\textsc{Replace}_j$ message to its neighbors $(u_{\kappa})_{\kappa\in N(i, t)}$, informing them about the swap and requesting them to replace its identifier, \ie, $u_i$, with $u_j$.
    Similarly, peer $u_j$ also sends a $\textsc{Replace}_i$ message to its neighbors with the identifier of peer $u_i$.
    \item \textbf{Step 3:} Upon reception of a \textsc{Swap} message, peer $u_i$ (resp. $u_j$) overwrites its neighborhood and associated Poisson clocks it previously stored with the ones contained in the received message.
    This step changes the structure of the current graph $G(t)$, as depicted in Step 3 in Figure~\ref{fig:swap_illustration}.
    During this step, the graph may appear temporarily undirected.
    \item \textbf{Step 4:} Peers that received a $\textsc{Replace}_j$ message from peer $ u_i $ replace the identifier of $u_i$ with that of $u_j$.
    Additionally, consider peers that were in the former neighborhood of $u_i$ ($(u_{\kappa})_{\kappa\in N(i, t)}$) that now became a neighborhood of $u_j$. These peers also replace identifier $u_i$  with $u_j$, but without modifying the associated Poisson clock.
\end{itemize}

\sys is driven by the Poisson clocks placed on the graph edges. 
We explain how to implement them in a distributed manner in Section~\ref{sec:peerswap_with_delays}.
The rate $\alpha$ indicates that, on average, two peers connected by an edge swap their positions after every time period $\alpha$.
Thus, peers continuously exchange their neighborhoods and get their neighborhoods refreshed.
At any point in time $t \geq 0$, a peer $ u_i $ can invoke the $Sample_i(\sample, t)$ function, yielding a random sample of $\sample$ peers from the neighborhood of $u_i$ at time $t$ (Lines \ref{alg:line:14}-\ref{alg:line:15} in Algorithm~\ref{alg:main}).

\begin{remark}
    When we refer to $G(t)$ we always mean its final configuration. This means that if a Poisson clock rang at time $t$, $G(t)$ will always refer to the graph after the modification. 
\end{remark}

For the upcoming theoretical analysis, we assume that the system has no delays and all actions during the four steps described above are executed instantly.
Thus, by design, for any $t \geq 0$, the graph $G(t)$ is always undirected.
In Section~\ref{sec:peerswap_with_delays} we describe how \sys can be adapted to function in more realistic settings with network delays.

\section{The Equivalence of \sys and the Interchange Process} \label{section:interchange}
By considering the swap in Figure~\ref{fig:swap_illustration} and changing the graph representation at the end of Step 4 in Figure~\ref{fig:alternative}, we observe that running the swap procedure between two peers, \ie, modifying the structure of the graph, is equivalent to having a fixed graph structure on which we exchange the positions of the two considered peers.
In this section, we prove that this equivalence holds in general by relating \sys (which produces a dynamic sequence of graphs $G(t)$) to an interchange process on $G(0)$. This equivalence is the main building block of our analysis and allows us to obtain the convergence result for $Sample_i(b,t)$ in Section~\ref{section:mixing-time}. Complete proofs related to this section can be found in Appendix~\ref{appendix:proofs1}.

\subsection{Isomorphism of graphs built by \sys}

\vspace{5 pt}
\fbox{\begin{minipage}{22em}
\emph{During \sys process, the network can be described by a sequence of graphs $(G(t))_{t\geq 0}$ where the set of nodes $V$ represents peers (remains fixed), and edges $E(t)$ are modified.} \end{minipage}}

\vspace{5 pt}

Below, we show that for any $t\geq0$ their exists an isomorphism between $G(t)$ and $G(0)$.

\begin{definition}[isomorphism]
    Two graphs $G$ and $G'$ are isomorphic according to a bijection $\sigma: V(G)\rightarrow V(G')$ if for any two vertices $i$ and $j$ that are adjacent in $G$, $\sigma(i)$ and $\sigma(j)$ are adjacent in $G'$ (and vice versa).
\end{definition}

\begin{restatable}{lemma}{FirstBijection}
\label{lemma:transition-swap}
     For all $t\geq 0$, $G(0)$ and $G(t)$ are isomorphic according to a bijection $\sigma_{t}:V(G(0))\rightarrow V(G(t))$. Furthermore, if some nodes $i$ and $j$ are adjacent in $G(t)$, then the Poisson clock they share is $C_{e}$ with $e=(\sigma_{t}^{-1}(i), \sigma_{t}^{-1}(j))\in E(0)$.
\end{restatable}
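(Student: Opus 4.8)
The plan is to prove the statement by induction on the number of Poisson clock rings up to time $t$. The base case $t=0$ (or more precisely, before the first ring) is trivial: take $\sigma_0 = \mathrm{id}$, and the clock between adjacent $i$ and $j$ is by definition $C_e$ with $e = (i,j) \in E(0)$. For the inductive step, suppose the claim holds just before a ring at time $t$, with isomorphism $\sigma_{t^-}$ from $G(0)$ to $G(t^-)$ satisfying the clock-labelling property. Suppose the clock that rings is $C_e$ for $e = (a,b) \in E(0)$, so in $G(t^-)$ it is the clock shared by the adjacent nodes $p := \sigma_{t^-}(a)$ and $q := \sigma_{t^-}(b)$. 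I would then define $\sigma_t := \tau_{pq} \circ \sigma_{t^-}$, where $\tau_{pq}$ is the transposition of $p$ and $q$, and argue that $\sigma_t$ is the required isomorphism from $G(0)$ to $G(t)$.

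The core of the argument is to unwind exactly what Steps 1--4 of Algorithm~\ref{alg:main} do to the edge set and clock assignments, and check this matches the action of the transposition $\tau_{pq}$ on the labels. Concretely: after the swap, $p$ adopts the former neighbourhood and clocks of $q$ (Step 3), $q$ adopts those of $p$; every former neighbour $\kappa$ of $p$ (other than $q$) now points to $q$ instead of $p$, keeping the same clock $C_p(\kappa)$ (Step 4), and symmetrically for former neighbours of $q$. One must be slightly careful about the edge $(p,q)$ itself and about any common neighbours of $p$ and $q$, but in each case the bookkeeping shows that $G(t)$ is obtained from $G(t^-)$ by relabelling nodes $p \leftrightarrow q$ and leaving the underlying edge structure (as a labelled-by-$E(0)$ multiset of clocks) intact. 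Formally: $(i',j')$ is an edge of $G(t)$ iff $(\tau_{pq}(i'), \tau_{pq}(j'))$ is an edge of $G(t^-)$, and the shared clock is preserved under this correspondence. Composing with the inductive hypothesis $\sigma_{t^-}$ gives that $(i',j') \in E(t)$ iff $(\sigma_{t^-}^{-1}\tau_{pq}(i'), \sigma_{t^-}^{-1}\tau_{pq}(j')) = (\sigma_t^{-1}(i'), \sigma_t^{-1}(j')) \in E(0)$, which is exactly the isomorphism claim, and the clock between adjacent $i'$ and $j'$ in $G(t)$ is $C_{e'}$ with $e' = (\sigma_t^{-1}(i'), \sigma_t^{-1}(j'))$, which is the clock-labelling claim.

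Since only finitely many rings occur in $[0,t]$ almost surely (the ring times of finitely many Poisson clocks form a discrete set), this induction terminates and establishes the lemma for all $t \geq 0$. I expect the main obstacle to be purely notational rather than conceptual: carefully tracking which node holds which clock after Steps 3 and 4 — in particular verifying that when a neighbour $\kappa$ is common to both $p$ and $q$, the two \textsc{Replace} messages it receives are consistent, and that the clock $C_e$ that just rang ends up correctly shared between $p$ and $q$ in $G(t)$ (now labelled by the same $e \in E(0)$, since $\sigma_t^{-1}(p) = b$, $\sigma_t^{-1}(q) = a$, and $(a,b) = e$). Writing Steps 1--4 as an explicit transformation of the pair (edge set, clock-assignment function) and then observing it factors as ``apply $\tau_{pq}$'' is the cleanest way to make this rigorous without drowning in case analysis.
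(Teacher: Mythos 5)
Your proposal is correct and follows essentially the same route as the paper: the paper also defines $\sigma_t$ as the composition $f_{e_N}\circ\dots\circ f_{e_1}$ of transpositions along the edges on which clocks rang, which is exactly your inductive construction $\sigma_t=\tau_{pq}\circ\sigma_{t^-}$. The only difference is that you spell out the Step 1--4 bookkeeping showing a swap acts as the relabelling $p\leftrightarrow q$ (with clocks carried along), a verification the paper compresses into ``by construction.''
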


Now that we know that graphs are isomorphic and that all Poisson clocks of the dynamic graphs $(G_t)_{t>0}$ can be mapped to Poisson clocks on the initial graph $G(0)$, we show that there exists an alternative bijection from $G(t)$ into $G(0)$ defined only by Poisson clocks on $G(0)$. This bijection is easier to operate with and is going to be used for the rest of the paper.

\begin{restatable}{lemma}{MainBijection}
  Consider \sys process starting from graph $G(0)=(V, E(0))$, resulting in sequence of graphs $(G(t))_{t\geq 0}=(V, E(t))_{t\geq0}$. Let us fix some value $t>0$ and let ring times of Poisson clocks up to time $t$ be $t_1, t_2, \dots, t_N$ with $0=t_0<t_1<\dots<t_N\leq t$. Denote $(i_m, j_m)=e_m\in E(t_{m-1})$ the edge on which the clock rang at time $t_m$, and let $(\ih_m, \jh_m)=\eh_m = (\sigma_{t}^{-1}(i_m), \sigma_{t}^{-1}(j_m))\in E(0)$ be the corresponding edge to $e_m\in E(0)$. We define
  \[ \gamma_t = f_{\eh_N} \circ \dots \circ f_{\eh_1},\]
    where for every $e$, $f_{e}$ is as defined in \eqref{eq:transition_f-1}.
    
    Then $\gamma_t=\sigma^{-1}_t$, where $\sigma_t$ is as defined in Lemma\ref{lemma:transition-swap}.
\end{restatable}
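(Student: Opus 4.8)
The plan is to proceed by induction on $N$, the number of Poisson clock rings up to time $t$. The key observation is that both sides of the claimed identity can be built up incrementally: as each new clock ring occurs, $\sigma_t^{-1}$ is modified in exactly the way that precomposing with a transposition $f_{\eh_m}$ prescribes. So I would define, for each $m \in \{0, 1, \dots, N\}$, the partial composition $\gamma^{(m)} = f_{\eh_m} \circ \dots \circ f_{\eh_1}$ (with $\gamma^{(0)} = \mathrm{id}$), and prove by induction that $\gamma^{(m)} = \sigma_{t_m}^{-1}$, i.e.\ that it coincides with the bijection from Lemma~\ref{lemma:transition-swap} associated to the graph $G(t_m)$ right after the $m$-th ring. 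The base case $m=0$ is immediate since $G(t_0) = G(0)$ and $\sigma_0 = \mathrm{id}$, so $\gamma^{(0)} = \mathrm{id} = \sigma_0^{-1}$. Note also that between ring times the graph does not change, so $\sigma_t = \sigma_{t_N}$ for $t \in [t_N, t_{N+1})$, which is why it suffices to track the identity at the ring times.

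For the inductive step, assume $\gamma^{(m-1)} = \sigma_{t_{m-1}}^{-1}$. At time $t_m$ the clock on edge $e_m = (i_m, j_m) \in E(t_{m-1})$ rings, and by Lemma~\ref{lemma:transition-swap} this clock is $C_{\eh_m}$ with $\eh_m = (\sigma_{t_{m-1}}^{-1}(i_m), \sigma_{t_{m-1}}^{-1}(j_m))$; crucially one must check that $\sigma_{t_m}$ uses the same $\eh_m$ (the bijection associated to $G(t_{m-1})$ and the one to $G(t_m)$ agree on the relevant labels — this is exactly what Lemma~\ref{lemma:transition-swap} guarantees). I would then analyze what the swap procedure (Steps 1–4 of Algorithm~\ref{alg:main}) does to the graph: peers $u_{i_m}$ and $u_{j_m}$ exchange their neighborhoods and each former neighbor relabels, with the net effect on the \emph{node labels} being precisely the transposition of the peers sitting at positions $i_m$ and $j_m$. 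Translating this into the language of bijections: if $\sigma_{t_{m-1}}$ maps node-of-$G(0)$ to node-of-$G(t_{m-1})$, then $\sigma_{t_m} = \tau_{e_m} \circ \sigma_{t_{m-1}}$ where $\tau_{e_m}$ is the transposition swapping $i_m$ and $j_m$ in $V(G(t_m))$; inverting gives $\sigma_{t_m}^{-1} = \sigma_{t_{m-1}}^{-1} \circ \tau_{e_m}^{-1} = \sigma_{t_{m-1}}^{-1} \circ \tau_{e_m}$. Finally, using $\sigma_{t_{m-1}}^{-1}(i_m) = \ih_m$ and $\sigma_{t_{m-1}}^{-1}(j_m) = \jh_m$, one verifies the pointwise identity $\sigma_{t_{m-1}}^{-1} \circ \tau_{e_m} = f_{\eh_m} \circ \sigma_{t_{m-1}}^{-1}$, since both sides send the unique node mapping to $i_m$ to $\jh_m$, the unique node mapping to $j_m$ to $\ih_m$, and everything else is left fixed. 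Combining, $\sigma_{t_m}^{-1} = f_{\eh_m} \circ \sigma_{t_{m-1}}^{-1} = f_{\eh_m} \circ \gamma^{(m-1)} = \gamma^{(m)}$, closing the induction; taking $m = N$ and extending constantly to $[t_N, t)$ yields $\gamma_t = \sigma_t^{-1}$.

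The main obstacle I anticipate is the careful bookkeeping in the inductive step — specifically, pinning down that the composition $\sigma_{t_m} = \tau_{e_m}\circ\sigma_{t_{m-1}}$ holds on the \emph{nose} (as opposed to up to some harmless relabeling), and that the $\eh_m$ appearing in the statement (defined via $\sigma_t^{-1}$, i.e.\ the \emph{final} bijection) is consistent with the $\eh_m$ one would naturally use at step $m$ (defined via $\sigma_{t_{m-1}}^{-1}$). Lemma~\ref{lemma:transition-swap} should supply exactly this consistency — it says that the Poisson clock shared by adjacent nodes in $G(t)$ is always the image under $\sigma_t^{-1}$ of an edge of $E(0)$, and the clocks are never destroyed or relabeled by a swap (Step 4 explicitly preserves the clocks), so the edge $\eh_m \in E(0)$ attached to the $m$-th ring is well-defined independently of which $G(t_k)$, $k \ge m$, we view it from. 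I would isolate this consistency as a short preliminary observation before running the induction, so that the inductive step reduces to the clean pointwise transposition identity above.
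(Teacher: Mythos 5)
Your proposal is correct and follows essentially the same route as the paper: induction on the number of clock rings, with the inductive step reducing to the pointwise conjugation identity $\sigma_{t_{m-1}}^{-1}\circ f_{e_m} = f_{\eh_m}\circ \sigma_{t_{m-1}}^{-1}$ (using that each $f_e$ is an involution), exactly as in the paper's appendix proof. Your preliminary consistency observation about $\eh_m$ — that it is the index in $E(0)$ of the clock that rang, naturally written via $\sigma_{t_{m-1}}^{-1}$ rather than the final $\sigma_t^{-1}$ — is well taken and matches what the paper's proof implicitly does when it sets $\eh_{M+1}=(\sigma_{t_M}^{-1}(i_{M+1}),\sigma_{t_M}^{-1}(j_{M+1}))$.
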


\begin{figure*}
\centering
\vspace{-0.3cm}
\includegraphics[width=.9\textwidth]{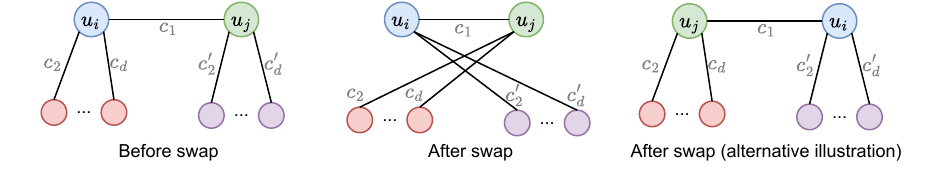}
\caption{The network structure before and after a swap in \sys between a pair of adjacent peers $u_i$ and $u_j$ (as also shown in Figure~\ref{fig:swap_illustration}). $c_1, c_2,\dots, c_d, c'_2,\dots, c'_d$ correspond to Poisson clocks assigned to the respective edges. The alternative illustration after the swap shows that the general network pattern did not change, but peers $u_i, u_j$ switched their positions.  }
\label{fig:alternative}
\vspace{-0.3cm}
\end{figure*}

\subsection{Alternative \sys process and its connection to interchange}

Due to the isomorphism of $(G(t))_{t\geq0}$ and $G(0)$, studying the dynamic of  $(G(t))_{t\geq0}$ is equivalent to studying an auxiliary process where peers move on a fixed graph $G(0)$. Formally this process is described below.

\begin{definition}\label{def:main_process}
We study connections between any $k$ distinct peers $(u_{i_1}, u_{i_2}, \dots u_{i_k})\subset U$, such that $n\geq k>0$. To do so, we consider the process $Y^{(k)}=(Y^{(k)}_t)_{t\geq 0}$.  It has a set of states $(V)_k$, initial state $Y^{(k)}_0=(i_1, i_2, \dots, i_k)$ and for any $t \geq 0$ we set $Y^{(k)}_t = \gamma_t(Y^{(k)}_0)$, where $\gamma_t(\mathbf{x})=\left ( \gamma_t(\mathbf{x}[i]) \right)_{i\in[1, k]}$ for any $\mathbf{x}\in (V)_k$. 
\end{definition}

\vspace{5 pt}

\fbox{\begin{minipage}{23em}
\emph{Studying processes $Y^{(k)}$ is an alternative way to describe dynamic of \sys. $Y^{(k)}$ operates on a fixed graph $G(0)=(V,E(0))$, and peers $U$ or their subset are switching positions on top of $V$.}
\end{minipage}}

\vspace{5 pt}

We show that for any $k$ process $Y^{(k)}$ is a time-homogeneous ctMC and prove that it is an interchange process.

\begin{restatable}{theorem}{SwapIsInterchange}\label{thm:swap-is-interchange}
     Consider \sys process on peers $U$ starting from graph $G(0)=(V, E(0))$, and process $\swap^{(k)}$ characterizing some $k$ peers in \sys, $1\leq k\leq |V|$ as per Definition~\ref{def:main_process}. Then, $\swap^{(k)}$ is an interchange process $IP(k, G(0))$ (as defined in Section \ref{sec:interchange}.)
     
\end{restatable}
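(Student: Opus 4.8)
The plan is to verify the two defining properties of a continuous-time Markov chain, namely the Markov property with time-homogeneity, and then matching the infinitesimal generator to the one prescribed for $IP(k, G(0))$ in \eqref{eq:interchange_transition}. The starting state is correct by construction: $Y^{(k)}_0 = (i_1,\dots,i_k)\in (V)_k$, and since each $f_{\eh_m}$ is a bijection of $V$ preserving distinctness of coordinates, $Y^{(k)}_t = \gamma_t(Y^{(k)}_0)$ always lies in $(V)_k$, so the state space is right.

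First I would establish the Markov property and time-homogeneity. The key observation is that $\gamma_t$ is built from the ring times and ringing edges of the Poisson clocks $\{C_e\}_{e\in E(0)}$ on the \emph{fixed} graph $G(0)$ — by Lemma~\ref{lemma:transition-swap} and the alternative-bijection lemma, the composition $f_{\eh_N}\circ\cdots\circ f_{\eh_1}$ depends only on which clocks on $E(0)$ rang and in what order, not on the evolving graphs $G(t)$. Since the clocks are independent Poisson processes, for $s<t$ the increment $\gamma_t\circ\gamma_s^{-1}$ is a composition of transpositions $f_{\eh}$ over exactly those clocks of $G(0)$ that ring in $(s,t]$; by the memorylessness of the exponential inter-ring times this is independent of the clock history up to time $s$ (hence of $(Y^{(k)}_r)_{r\le s}$), and its law depends only on $t-s$. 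This gives both the Markov property and time-homogeneity, so $Y^{(k)}$ is a time-homogeneous ctMC on $(V)_k$.

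Next I would compute the infinitesimal generator $q(\mathbf{x},\mathbf{y})$ for $\mathbf{x}\neq\mathbf{y}$ via $q(\mathbf{x},\mathbf{y}) = \lim_{h\to 0^+}\tfrac{1}{h}\Pr[Y^{(k)}_{t+h}=\mathbf{y}\mid Y^{(k)}_t=\mathbf{x}]$. In a window of length $h$, the probability that a given clock $C_e$ rings once is $\alpha h + o(h)$, that two or more clocks ring (or one clock rings twice) is $o(h)$, and that none rings is $1-|E(0)|\alpha h + o(h)$. Conditioned on exactly the single clock $C_e$ ringing, the state moves from $\mathbf{x}$ to $f_{\eh}(\mathbf{x})$ where $\eh\in E(0)$ is the edge carrying that clock (which, by the alternative-bijection lemma, is precisely the edge $e$ itself since we are working on $G(0)$). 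Summing over $e\in E(0)$: $q(\mathbf{x},\mathbf{y}) = \alpha\cdot\#\{e\in E(0): f_e(\mathbf{x})=\mathbf{y}\}$. Since $G(0)$ is a simple graph, at most one edge $e$ satisfies $f_e(\mathbf{x})=\mathbf{y}$ (distinct edges give distinct images when they move a coordinate), so $q(\mathbf{x},\mathbf{y})=\alpha$ if some $e\in E(0)$ has $f_e(\mathbf{x})=\mathbf{y}$ and $0$ otherwise. With $\alpha=1$ this is exactly \eqref{eq:interchange_transition}; for general $\alpha$ it is $IP(k,G(0))$ up to the global time-rescaling by $\alpha$, which I would note explicitly. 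Irreducibility follows from connectedness of $G(0)$, as recalled in Section~\ref{sec:interchange}.

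The main obstacle, and the step deserving the most care, is the rigorous justification that $\gamma_t$ — and hence the whole process — is a measurable functional of the clocks on the \emph{fixed} graph $G(0)$ with independent increments, despite the fact that the edges $e_m\in E(t_{m-1})$ on which clocks ring are defined on the \emph{changing} graphs. This is really the content of the two preceding lemmas, so the work is to invoke them correctly: one must argue that ``the set of $G(0)$-clocks that ring in $(s,t]$ together with their order'' is well-defined independently of $s$-history, using that each physical clock $C_e$ is pinned to a fixed edge of $E(0)$ throughout (Step 4 of the algorithm never reassigns a clock across the isomorphism, per the clause ``without modifying the associated Poisson clock''). Once that reduction is airtight, the Poisson-clock calculus above is routine.
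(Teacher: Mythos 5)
Your proposal is correct and follows essentially the same route as the paper: first establish that $Y^{(k)}$ is a time-homogeneous ctMC by exploiting that $\gamma_t$ is a functional of the independent, memoryless Poisson clocks pinned to the fixed edges of $E(0)$ (the paper's Lemma~\ref{lemma:time-properties} plus the auxiliary Markov/homogeneity theorem), and then identify the infinitesimal generator by conditioning on zero, one, or several clock rings in a window of length $h$ and matching the result to \eqref{eq:interchange_transition}. The only cosmetic differences are that you use $\alpha h + o(h)$ asymptotics where the paper Taylor-expands the exact probabilities $\e^{-h\mathfrak{m}}$ and $(1-\e^{-h})\e^{-h(\mathfrak{m}-1)}$, and that you make the $\alpha$-time-rescaling explicit where the paper works with unit-rate clocks.
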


\begin{proof}[Proof sketch]
We demonstrate that since isomorphisms $\gamma_t$ rely on Poisson clocks that are memoryless,  $\swap^{(k)}$ is a time-homogeneous ctMC. Then, to show that two ctMCs are equivalent, it is sufficient to prove that their infinitesimal generators are equal (as in Definition~\ref{def:infinitesimal}). We calculate the probabilities of zero, one, and more than one Poisson clock ringing in any period of length $h$. These probabilities allow to derive $ \lim\limits_{h\rightarrow 0+}\frac{P_{\mathbf{x}, \mathbf{y}}(h)}{h} = q(\mathbf{x}, \mathbf{y})$ for any two states $\mathbf{x}, \mathbf{y}\in (V)_k$ and to show the equality with \eqref{eq:interchange_transition}. 
\end{proof}

\section{Convergence time of \sys}
\label{section:mixing-time}

Since \sys is tightly connected to an interchange process, we can study the neighborhood of each peer in an execution of \sys on $G(0)$ by considering $IP(n, G(0))$. By utilizing this, we bound the difference between the probability of any particular peer sample that \sys can provide and the uniform probability, \ie, $\frac{1}{\binom{n-1}{d}}$. Complete proofs related to this section can be found in Appendix~\ref{appendix:proofs2}.

\begin{restatable}{theorem}{FirstConvergenceTheorem}\label{thm:main-1}
     Consider a set of peers $U=\{u_1, \dots, u_n\}$ connected in a $d$-regular connected graph $G(0) =(V, E(0))$, such that $V =\{1, 2, \dots, n\}$ and each node $i$ corresponds to peer $u_i$. Also, consider \sys protocol  starting from $G(0)$ with Poisson clocks of rate $\alpha$. Then for any $\varepsilon>0$, any peer $u_i\in U$, and any $\mathbf{u}\in (U\backslash \{u_i\})_d$ the following holds true \[\left|\Pr[Sample_i(d, T)=\mathbf{u}]- \frac{(n-d-1)!}{(n-1)!}\right|\leq 4\varepsilon,\]
    where $T:= \alpha\cdot T_{Y^{(d+1)}}(\varepsilon)$ and $Y^{(d+1)}$ is an $IP(d+1, G(0))$ i.e., $\varepsilon$-mixing time of an interchange process $IP(d+1, G(0))$.
\end{restatable}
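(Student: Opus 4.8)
The plan is to strip away the two sources of randomness in $Sample_i(d,\cdot)$ and reduce the statement to a single total-variation estimate for a \emph{small} interchange process. Since every $G(t)$ is $d$-regular (it is isomorphic to $G(0)$ by Lemma~\ref{lemma:transition-swap}), the neighborhood $N(i,T)$ has exactly $d$ elements, so by the definition of $Sample_i$ (Lines~\ref{alg:line:14}--\ref{alg:line:15}) its output is an independent, uniformly random ordering of the \emph{set} $N(i,T)$. Writing $\underline{\mathbf u}\subseteq U\setminus\{u_i\}$ for the unordered set underlying $\mathbf u=(u_{a_1},\dots,u_{a_d})$, this yields the exact identity
\[
\Pr\bigl[Sample_i(d,T)=\mathbf u\bigr]=\frac{1}{d!}\,\Pr\bigl[N(i,T)=\underline{\mathbf u}\bigr],
\]
so it suffices to control the law of the unordered neighborhood of $u_i$ at time $T$.

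Next I would transport the event $\{N(i,T)=\underline{\mathbf u}\}$ onto the fixed graph $G(0)$. By Lemma~\ref{lemma:transition-swap} and the companion lemma that follows it, $\gamma_T=\sigma_T^{-1}$ is a graph isomorphism $G(T)\to G(0)$; hence node $j$ is adjacent to $i$ in $G(T)$ iff $\gamma_T(j)\in N_{G(0)}(\gamma_T(i))$, and since both sides are sets of size $d$, the event $\{N(i,T)=\underline{\mathbf u}\}$ coincides with $\{\gamma_T(a_1),\dots,\gamma_T(a_d)\}=N_{G(0)}(\gamma_T(i))$. Now track only the $d+1$ peers $u_i,u_{a_1},\dots,u_{a_d}$ by setting $Z_t=\bigl(\gamma_t(i),\gamma_t(a_1),\dots,\gamma_t(a_d)\bigr)\in(V)_{d+1}$. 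This is exactly the process $Y^{(d+1)}$ of Definition~\ref{def:main_process} started at $(i,a_1,\dots,a_d)$, which by Theorem~\ref{thm:swap-is-interchange} is the interchange process $IP(d+1,G(0))$; since $G(0)$ is connected this chain is irreducible with a unique stationary law $\pi^{(d+1)}$, and after accounting for the rate-$\alpha$ clocks the choice $T=\alpha\,T_{Y^{(d+1)}}(\varepsilon)$ makes $d_{TV}\bigl(\mathrm{law}(Z_T),\pi^{(d+1)}\bigr)\le\varepsilon$. In these terms the event is $\{Z_T\in\mathcal B\}$ for the fixed, $T$-independent target
\[
\mathcal B=\bigl\{\mathbf z\in(V)_{d+1}:\ \{\mathbf z[2],\dots,\mathbf z[d+1]\}=N_{G(0)}(\mathbf z[1])\bigr\}.
\]

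It remains to evaluate $\pi^{(d+1)}(\mathcal B)$ and conclude. Each transposition $f_e$ is an involution, so the generator in \eqref{eq:interchange_transition} is symmetric, $q(\mathbf x,\mathbf y)=q(\mathbf y,\mathbf x)$; a symmetric generator on a finite irreducible chain has the uniform distribution as its unique stationary law, so $\pi^{(d+1)}$ is uniform on $(V)_{d+1}$. Counting $|\mathcal B|=n\cdot d!$ (pick $\mathbf z[1]$ in $n$ ways, then order its $d$ neighbors) against $|(V)_{d+1}|=n(n-1)\cdots(n-d)$ gives $\pi^{(d+1)}(\mathcal B)=\frac{n\,d!}{n(n-1)\cdots(n-d)}=\frac{d!\,(n-d-1)!}{(n-1)!}$. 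Since $\bigl|\Pr[Z_T\in\mathcal B]-\pi^{(d+1)}(\mathcal B)\bigr|\le d_{TV}\bigl(\mathrm{law}(Z_T),\pi^{(d+1)}\bigr)\le\varepsilon$, dividing by $d!$ and combining with the first display yields $\bigl|\Pr[Sample_i(d,T)=\mathbf u]-\tfrac{(n-d-1)!}{(n-1)!}\bigr|\le\varepsilon/d!$, comfortably within the claimed $4\varepsilon$ (a more conservative accounting, e.g.\ routing through $IP(n,G(0))$ and then marginalizing onto $d+1$ coordinates, gives the same form up to a small constant).

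The only genuinely delicate point is the reduction in the second paragraph: to obtain a bound governed by the mixing time of the \emph{small} chain $IP(d+1,G(0))$ rather than the much slower full chain $IP(n,G(0))$, it is essential that the target event be expressible through only the $d+1$ tracked coordinates, and that the time change induced by the rate-$\alpha$ Poisson clocks be handled so that real time $T=\alpha\,T_{Y^{(d+1)}}(\varepsilon)$ realizes exactly one $\varepsilon$-mixing of the interchange chain. Both facts are supplied by Theorem~\ref{thm:swap-is-interchange} and the memorylessness of the clocks; everything else is the stationary-measure count above together with the elementary inequality $|\mu(\mathcal B)-\nu(\mathcal B)|\le d_{TV}(\mu,\nu)$.
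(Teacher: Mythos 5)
Your argument is correct, and up to the identification of the event $\mathcal B$ and the use of the $IP(d+1,G(0))$ mixing time it follows the same skeleton as the paper: the factor $\tfrac{1}{d!}$ from the random ordering, the transport of $\{N(i,T)=\underline{\mathbf u}\}$ onto $G(0)$ via $\gamma_T$, and the restriction to the $d+1$ tracked peers through Theorem~\ref{thm:swap-is-interchange} all appear verbatim in the paper's proof. Where you genuinely diverge is the last step. The paper never identifies the stationary distribution: it introduces the unknown quantity $\rho=\tfrac{1}{d!}\pi(\mathcal B)$, shows every sample probability lies within $2\varepsilon$ of $\rho$ (using that the stationary law and mixing time of $IP(d+1,G(0))$ do not depend on the initial state), deduces by the triangle inequality that any two sample probabilities differ by at most $4\varepsilon$, and then uses $\sum_{\mathbf u}\Pr[Sample_i(d,T)=\mathbf u]=1$ over the $\tfrac{(n-1)!}{(n-d-1)!}$ samples to pin each probability within $4\varepsilon$ of the uniform value. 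You instead observe that the generator \eqref{eq:interchange_transition} is symmetric (each $f_e$ is an involution), hence $\pi^{(d+1)}$ is uniform on $(V)_{d+1}$, compute $\pi^{(d+1)}(\mathcal B)=\tfrac{n\,d!}{n(n-1)\cdots(n-d)}$ using $d$-regularity, and conclude with a single total-variation estimate. Your route yields the sharper bound $\varepsilon/d!$ and exhibits the limiting value directly, at the price of the explicit stationarity computation; the paper's route buys independence from any knowledge of $\pi$ beyond its existence, uniqueness, and initial-state independence, which is why it transfers without change to settings where one prefers not to evaluate $\pi(\mathcal B)$ (e.g.\ the irregular-graph variant alluded to in their footnote, where the target probability is no longer a single closed-form constant). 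Both treatments of the rate-$\alpha$ rescaling match the statement's convention, so no gap there.
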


\begin{proof}[Proof Sketch] We start by showing that for bounding the left side of theorem inequality it is sufficient to bound the difference between the probabilities that $\mathbf{u}$ and any other sample $\mathbf{u'}\in (U\backslash\{u_i\})_d$ are neighborhoods of $u_i$. Then, we use the fact from Theorem\ref{thm:main-1} that connections between $u_i$ and $\mathbf{u}$ (resp. $\mathbf{u'}$) in \sys follow the same dynamic as those of peers in $Y$ (resp. $Y'$), interchange processes $IP(d+1, G(0))$. Processes  $Y$ and $Y'$ have the same stationary distributions and mixing times $T=T_Y(\varepsilon)=T_{Y'}(\varepsilon)$. Hence, we consider $\rho$ -- the stationary probability of the event that peers are located in such a way that the first one has all others in the neighborhood. Then after time $T$ both $\Pr[Sample_i(d, T)=\mathbf{u}]$ and $\Pr[Sample_i(d, T)=\mathbf{u'}]$ differ from $\rho$ no more than by $\varepsilon$. This implies that the difference between these probabilities is no more than $2\varepsilon$. The amount of different samples is $\frac{(n-1)!}{(n-d-1)!}$ that concludes bounding of the theorem expression.
\end{proof}

The above result ensures that distribution of $Sample_i(d, t)$ converges to a uniform distribution on $(U\backslash\{u_i\})_d$ for any $u_i\in U$.
We now assess the convergence time of \sys.

\begin{restatable}{theorem}{FinalConvergenceTheorem} \label{th:main-2}
    Consider a set of peers $U=\{u_1, \dots, u_n\}$ connected in a $d$-regular connected graph $G(0) =(V, E(0))$ and \sys protocol executing starting from $G(0)$ with Poisson clocks of rate $\alpha$. For any peer $u_i\in U$, we consider the random variable $Sample_i(d, t)$ after time $t$ and denote by $L(Sample_i(d, t))$ its probability distribution. Then for any $\delta\in(0, 1)$ there exists $T\in O\left(\frac{\alpha \log(n)\log(n^{d+1}/\delta)}{d\lambda(G(0))}\right)$ such that:
    \[d_{TV}\left(L(Sample_i(d, T)), Uniform((U\backslash\{u_i\})_d)\right)\leq \delta.\]
\end{restatable}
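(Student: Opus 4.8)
The plan is to combine Theorem~\ref{thm:main-1} with a concrete upper bound on the mixing time $T_{Y^{(d+1)}}(\varepsilon)$ of the interchange process $IP(d+1, G(0))$, and then to choose $\varepsilon$ in terms of $\delta$. First I would observe that Theorem~\ref{thm:main-1} already gives, for every peer $u_i$ and every target tuple $\mathbf{u}\in(U\backslash\{u_i\})_d$,
\[
\left|\Pr[Sample_i(d,T)=\mathbf{u}] - \frac{(n-d-1)!}{(n-1)!}\right| \le 4\varepsilon
\]
with $T = \alpha\cdot T_{Y^{(d+1)}}(\varepsilon)$. Summing this pointwise bound over all $\frac{(n-1)!}{(n-d-1)!}$ elements of $(U\backslash\{u_i\})_d$ and dividing by two gives
\[
d_{TV}\left(L(Sample_i(d,T)),\, Uniform((U\backslash\{u_i\})_d)\right) \le \frac{1}{2}\cdot\frac{(n-1)!}{(n-d-1)!}\cdot 4\varepsilon = 2\varepsilon\cdot\frac{(n-1)!}{(n-d-1)!}.
\]
Since $\frac{(n-1)!}{(n-d-1)!} \le n^{d+1}$, it suffices to make $2\varepsilon n^{d+1}\le\delta$, i.e. to take $\varepsilon = \frac{\delta}{2n^{d+1}}$.

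The second ingredient is a bound on $T_{Y^{(d+1)}}(\varepsilon)$, the $\varepsilon$-mixing time of $IP(k,G(0))$ with $k=d+1$. Here I would invoke the known comparison between the interchange process and the single-particle random walk: by the result of Oliveira (cited as~\cite{oliveira2013mixing}), the spectral gap of $IP(k,G)$ on any graph $G$ equals, up to universal constants, the spectral gap of the associated random walk (or at least is bounded below by a constant times it), and hence the relaxation time of $IP(k,G(0))$ is $O\!\left(\frac{1}{\lambda(G(0))}\right)$, independent of $k$. Converting a relaxation-time bound into an $\varepsilon$-mixing-time bound costs a factor $\log\!\left(\frac{1}{\varepsilon\,\pi_{\min}}\right)$, where $\pi_{\min}$ is the smallest stationary probability; here the stationary distribution of $IP(d+1,G(0))$ is uniform on $(V)_{d+1}$ for a regular graph, so $\pi_{\min} = \frac{(n-d-1)!}{n!} \ge n^{-(d+1)}$. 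Therefore
\[
T_{Y^{(d+1)}}(\varepsilon) \;=\; O\!\left(\frac{1}{\lambda(G(0))}\,\log\frac{n^{d+1}}{\varepsilon}\right).
\]
An additional factor of $\frac{\log n}{d}$ appears in the statement; I would trace this to the sharper form of Oliveira's bound, in which the mixing time of $IP(k,G)$ is $O\!\left(\frac{\log n}{d\,\lambda(G)}\cdot\log(\text{something})\right)$ for $d$-regular $G$ — effectively, the per-edge clock rate is normalized so that each particle is involved in $\Theta(d)$ swaps per unit time, which speeds mixing by a factor $d$, while the $\log n$ accounts for the coupon-collector-type cost of having all $n$ positions equilibrate. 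I would state this as a lemma (citing~\cite{oliveira2013mixing}) and plug it in.

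Finally I would assemble the pieces: substituting $\varepsilon = \frac{\delta}{2n^{d+1}}$ into the mixing-time bound gives $\log\frac{n^{d+1}}{\varepsilon} = \log\frac{2n^{2(d+1)}}{\delta} = O\!\left(\log(n^{d+1}/\delta)\right)$ (absorbing the extra $\log n^{d+1}$ and the constant $2$), so that
\[
T \;=\; \alpha\cdot T_{Y^{(d+1)}}(\varepsilon) \;=\; O\!\left(\frac{\alpha\log(n)\,\log(n^{d+1}/\delta)}{d\,\lambda(G(0))}\right),
\]
and for this $T$ the total-variation bound above reads $d_{TV}\le 2\varepsilon n^{d+1} \le \delta$, as required. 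The main obstacle I anticipate is pinning down the precise form of the interchange-process mixing-time bound — in particular justifying the $\frac{\log n}{d}$ factor rather than a cruder $\frac{k\log n}{\lambda}$ or $\frac{\log n}{\lambda}$ bound — since this requires quoting the strongest available version of the interchange/random-walk comparison and correctly matching the clock-rate normalization of $IP(d+1,G(0))$ to the one used in that reference; the rest (the union bound over samples and the choice of $\varepsilon$) is routine.
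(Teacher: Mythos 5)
The first half of your plan coincides with the paper's proof: apply Theorem~\ref{thm:main-1} to each tuple, sum over the $\frac{(n-1)!}{(n-d-1)!}\le n^d$ tuples (your cruder bound $n^{d+1}$ is harmless), and choose $\varepsilon$ of order $\delta/n^{d+1}$ (the paper takes $\varepsilon'=\frac{\delta}{2n^d}$). The genuine gap is the second ingredient, the bound on $T_{Y^{(d+1)}}(\varepsilon)$, which you leave as a lemma whose form you admit you cannot pin down, and your sketched justification is off in two places. First, the result of~\cite{oliveira2013mixing} used here (Theorem~\ref{thm:interchange-rw}) is not a spectral-gap comparison: it directly states $T_{IP(k,G)}(\varepsilon)\le C\ln\left(\frac{|V|}{\varepsilon}\right)T_{RW(G)}(1/4)$. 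The assertion that the interchange process has the same spectral gap as the single-particle walk is Aldous' conjecture (Caputo--Liggett--Richthammer), a different and much heavier theorem; and even on that route the relaxation time of the generator is $\frac{1}{d\lambda(G(0))}$, not $O\left(\frac{1}{\lambda(G(0))}\right)$ as you wrote, because each node carries $d$ independent rate-$\alpha$ clocks. Second, the factor $\frac{\log n}{d}$ does not come from a ``sharper form'' of Oliveira's bound or a coupon-collector effect: it is simply the mixing time of the continuous-time random walk itself. Lemma~\ref{lemma:rw-mixing-time} applies the generic ctMC bound $T_X(\varepsilon)\le\frac{1}{r\,\lambda(P)}\log\frac{1}{\varepsilon\,\pi_{\min}}$ with total jump rate $r=d$, $P=Q/d+I=W_{G(0)}$ and $\pi_{\min}=1/n$, giving $T_{RW(G(0))}(1/4)\le\frac{\log(4n)}{d\,\lambda(G(0))}$; Oliveira's theorem then contributes the remaining factor $\log\left(\frac{n}{\varepsilon'}\right)=\log\left(\frac{2n^{d+1}}{\delta}\right)$. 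Substituting these two facts for your heuristic lemma turns your outline into the paper's proof.

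Incidentally, your spectral-gap route, carried out correctly (generator gap of $IP(d+1,G(0))$ equal to $d\lambda(G(0))$ by the proof of Aldous' conjecture, then the relaxation-to-mixing conversion with $\pi_{\min}\ge n^{-(d+1)}$), would yield the slightly stronger bound $O\left(\frac{\alpha\log(n^{d+1}/\delta)}{d\,\lambda(G(0))}\right)$, i.e.\ without the extra $\log n$; the paper prefers the weaker but self-contained combination of Theorem~\ref{thm:interchange-rw} and Lemma~\ref{lemma:rw-mixing-time}. Either way, the missing step in your proposal is precisely this quantitative interchange-process mixing bound, and it cannot be waved through by rate-normalization intuition alone.
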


\begin{proof}[Proof Sketch]
To calculate the total variation distance described in the theorem statement, we write its value by Definition~\ref{def:TV} and use the result of Theorem~\ref{thm:main-1}. Then, we estimate the value of $T$ by bounding the mixing time of an interchange process $IP(d+1, G(0))$. This bound is possible due to the connection of interchange processes and random walks. Random walks are ctMCs with mixing time that is directly related to the graph connectivity. 
\end{proof}

\textbf{Implications of Theorem~\ref{th:main-2}.}
Hence, when $\delta$ and $\alpha$ are constants of $n$, the distribution of $Sample_i(d, t)$, $\forall i\in V$ is $\delta$-close to the uniform one after time $T= O\left(\frac{\log^2(n)}{\lambda(G(0))}\right)$, which is a simpler expression than the bound derived above. Moreover, if the graph is sufficiently-connected, \ie, the spectral gap of $G(0)$ is $\Omega(\frac{1}{\log^c(n)})$, where $c$ is a constant (applicable to many popular graphs such as hypercubes, Erdos-Renyi or expander graphs), then the convergence time of \sys on network $G(0)$ is polylogarithmic.
Lastly, we note that our upper bound on convergence time with sample size $d$ also applies to any other sample size $b$ where $1\leq b\leq d$.

\section{A Practical Implementation of \sys}
\label{sec:peerswap_with_delays}
So far, we have presented the \sys protocol and theoretically proven its convergence.
In this section, we first describe how one can implement a shared Poisson clock in a decentralized manner and then present a variant of Algorithm~\ref{alg:main} that can function in a setting with network delays.

\subsection{Decentralized Poisson Clocks}
Algorithm \ref{alg:main} assigns a Poisson clock to each edge to schedule swaps.
If a central coordinator supervises the network topology, it can keep track of these clocks and inform the peers at the endpoints of an edge when a swap occurs (\ie, a clock rings).
In networks without centralized control, however, realizing a Poisson clock on each edge requires coordination and synchronization between the peers at both ends of the edge.
To address this, we ensure that both peers at the endpoints of a particular edge $ e $ construct a common Poisson clock for $ e $ by initializing and operating a pseudorandom number generator with a common seed.
We refer the reader to Appendix~\ref{app:decentralized_poisson_clocks}, where we outline this subprotocol.

\begin{figure}[b]
\centering
\includegraphics[width=.85\linewidth]{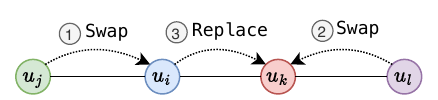}
\caption{A situation where the execution of Algorithm~\ref{alg:main} in settings with network delays can violate protocol correctness and alter the fixed structure of the graph.}
\label{fig:peerswap_fail}
\end{figure}

\subsection{The Challenge of \sys with Network Delays}
Deploying Algorithm~\ref{alg:main} in P2P networks with network delays may fail, as network delays affect the ordering of messages received by peers and thus break the fixed structure of the underlying topology.
This is because the execution of a single swap $ A $ consists of two sequential message deliveries (\textsc{swap} and \textsc{replace}), each message taking some time to be delivered if network delays are present.
Meanwhile, the execution of another swap may interfere with $ A $.
We elaborate this with an example, involving four peers $ u_i $, $ u_j $, $ u_k $ and $ u_l $, as shown in Figure~\ref{fig:peerswap_fail}.
When the Poisson clock associated with the edge $ (u_i, u_j) $ rings, $ u_j $ will send a \textsc{Swap} message to $ u_i $ (for presentation clarity, we do not show the \textsc{Swap} message by $ u_i $).
Now assume that nearly at the same time, the Poisson clock associated with the edge $ (u_k, u_l) $ rings.
Peer $ u_l $ thus sends a \textsc{Swap} message to $ u_k $.
Assume that $ u_k $ receives this swap message before the \textsc{Replace} message from $ u_i $, the latter being associated with the swap between $ u_i $ and $ u_j $.
Now, $ u_k $ will replace its neighbors with the neighbors of $ u_l $.
When $ u_k $ later receives the \textsc{Replace} message from $ u_i $, $ u_k $ might now be unable to correctly replace $ u_i $ in its neighbor list, resulting in an execution error.
More specifically, the \textsc{Replace} message sent by $ u_i $ to $ u_k $ should have been sent to $ u_l $ instead.

\subsection{Lock-based Peer Swaps}
Generally speaking, it is unsafe to execute the \sys protocol for any two swaps involving at least one common peer.
We address this concern by having peers \emph{lock other peers prior to executing a swap}.
In this extended protocol, a peer $ u_i $ that is locked for some swap $ A $ will only process messages related to $ A $.
When a swap between $ u_i $ and $ u_j $ starts, both $ u_i $ and $ u_j $ attempt to lock their neighbor peers.
If $ u_i $, $ u_j $, or any of their neighbors is already locked for another swap, the swap fails.
Despite the fact that some swaps might fail, we found this simple protocol to be effective and avoiding the need for more resource-intensive forms of coordination such as consensus or other agreement protocols.
We next describe the involved steps in this extended protocol that makes \sys suitable for deployment in settings with network delays.
We explain the execution of a particular swap $ A $ between peers $ u_i $ and $ u_j $, and outline the scenarios where a swap can fail.

\textbf{Preparing for a swap.}
Each peer keeps track of its lock status, \ie, if it has been locked for some swap.
Whenever a Poisson clock on some edge $ e = (i, j) $ rings, $ u_i $, respectively $ u_j $, checks if it is already locked.
If $ u_i $ is not locked for another swap $ A' \neq A $, it will lock itself for $ A $.
$ u_i $ then request its neighbors to lock themselves as well for $ A $ by sending a \textsc{LockRequest} message to all its neighbors except $ u_j $.
This message includes the specifications of swap $ A $.

When peer $ u_k $ receives a \textsc{LockRequest} message from $ u_i $, $ u_k $ will check if it is already locked for another swap $ A' \neq A $.
If not, it will lock itself for $ A $ and reply with a \textsc{LockResponse} message.
This message contains a binary flag \emph{success} that indicates whether $ u_k $ has successfully locked itself in response to the received \textsc{LockRequest} message from $ u_i $.
If $ u_k $ is not locked, \emph{success} will be set to true, whereas if $ u_k $ is already locked for another swap, it will be set to false.

Upon reception of a \textsc{LockResponse} message by $ u_i $ from $ u_k $, $ u_i $ verifies if it is still participating in swap $ A $, \ie, is locked for this swap, and ignores the message if it is unlocked.
This check is necessary because swap $ A $ could have already failed (explained below) even though the \textsc{LockResponse} message from $ u_k $ (or the \textsc{LockRequest} message to $ u_k $) was still in transit.
If all neighbors of $ u_i $ replied with a \textsc{LockResponse} message containing a positive value for the \textsc{success} field, $ u_i $ sends a \textsc{Swap} message to $ u_j $.
However, if even a single \textsc{LockResponse} message with a negative value for the \textsc{success} field is received, the swap cannot proceed.
In this situation, $ u_i $ sends an \textsc{Unlock} message to the neighbors it previously sent a \textsc{LockRequest} to, to ensure that these previously-locked neighbors can participate in other swaps.
Furthermore, it sends a \textsc{SwapFail} message to $ u_j $, informing about the failed swap.
When $ u_j $ receives a \textsc{SwapFail} message from $ u_i $, it also sends an \textsc{Unlock} message to its neighbors, except sending it to $ u_i $.
$ u_j $ will then unlock itself.

\textbf{Executing a swap.}
If $ u_i $ has received positive \textsc{LockResponse} messages from all its neighbors, as well as received a \textsc{Swap} message from $ u_j $, swap $ A $ is now safe to execute as all peers involved in $ A $ are aware of the swap and locked for it.
At that point, $ u_i $ sends \textsc{Replace} messages to its neighbors, similarly as described in Algorithm~\ref{alg:main}.
Then, $ u_i $ erases its neighborhood and stores identifiers of the neighbors of $u_j$ contained in the received \textsc{Swap} message.
Finally, $ u_i $ unlocks itself as it has completed all required actions for swap $ A $.
When $ u_k $ receives a \textsc{Replace} message, it will replace $ u_i $ with $ u_j $ in its neighborhood and unlock itself.
This completes the execution of swap $ A $.

\section{Numerical Evaluation}
\label{section:experiments}
We present numerical evaluations that analyze the convergence speed of \sys under different parameters and network topologies.
We have implemented a simulation of \sys in the Python programming language, using the \textsc{networkx} library to generate network topologies.
This simulator includes an implementation of our lock-based swap protocol described in Section~\ref{sec:peerswap_with_delays}.
All our source code, experiment scripts, and documentation is publicly available.\footnote{Source code available at \url{https://github.com/sacs-epfl/peerswap}.}

\subsection{Setup and Convergence Metric}
The main objective of our evaluation is to empirically show that \sys quickly provides peers with a neighborhood of peers that is indistinguishable from taking a uniform random sample from all peers in the topology.
To do so, we run \sys with different parameters and pre-defined experiment durations $ T$.
If not stated otherwise, the default rate of Poisson clocks is $1$.
At the end of each experiment, we observe the neighborhood of one or more peers.
We run each experiment multiple times (we specify how many for each experiment in this section), which yields the frequencies of final neighborhoods encountered for each peer. 
In perfect circumstances, each peer encounters each neighborhood with equal frequencies; however, these frequencies should show some variance in practice.
To understand the convergence of \sys, we disable network latencies for all upcoming experiments, except for those in Section~\ref{subsec:convergence_peerswap_with_network_delays}.

To determine convergence in \sys, we run a Kolmogorov–Smirnov (KS) test~\cite{berger2014kolmogorov} between the observed frequencies of neighborhoods and a synthetic sequence of neighborhood frequencies that we construct by uniformly sampling all possible neighborhoods.
The KS-test is a standard test that is frequently used to compare the similarity of distribution functions.
In the context of \sys, it indicates how far the neighborhood frequencies of a particular peer differ from a uniform random distribution.
This test yields two values: a distance and a $p$-value. 
In our experiments, the KS-test distance quantifies the maximal deviation between the cumulative distribution functions of the observed and expected neighborhood frequencies, indicating the extent of similarity.
The $p$-value essentially assesses the statistical significance of this deviation.
More precisely, it represents the probability of wrongly rejecting the null hypothesis "the two distributions are identical."
A lower $p$-value thus suggests less likelihood that the observed distribution matches a uniform distribution.

The following experiments are presented in three parts.
In Section~\ref{subsec:exp_varying_connectivity}, we provide an extensive analysis of neighborhoods' frequencies (of $Sample_i(\sample, T)$, when $\sample=d$) when running \sys in small-scale networks, \ie, $n\leq 100$ peers.
In Section~\ref{subsec:exp_convergence_large_topologies}, we present a peer-frequency analysis (of $Sample_i(\sample, T)$, when $\sample=1$) on large networks, \ie, $n\geq 1000$ peers.
Finally, in Section~\ref{subsec:convergence_peerswap_with_network_delays}, we explore the convergence and performance of \sys with different forms of network delays.

\begin{figure}[t]
\centering
\includegraphics[width=\linewidth]{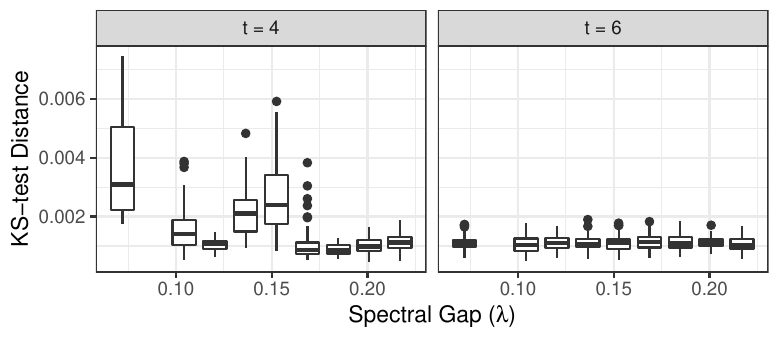}
\caption{The distribution of KS-test distances for neighborhoods distributions, when running \sys for topologies with different spectral gaps ($ \lambda $) and experiment durations ($ T $).} 
\vspace{-0.3cm}
\label{fig:ks_distances}
\end{figure}

\begin{figure*}[ht!]
	\centering
	\begin{subfigure}{.36\linewidth}
		\centering
		\includegraphics[width=\columnwidth]{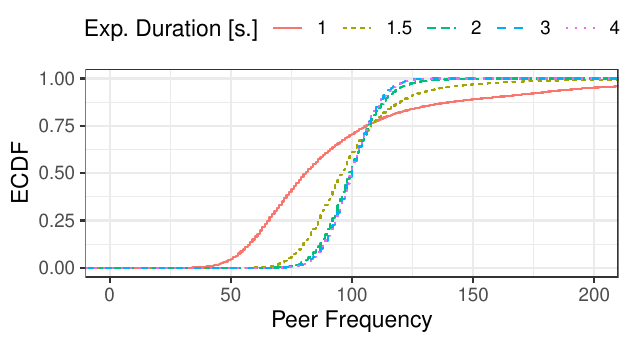}
        \captionsetup{width=.9\linewidth}
        \caption{Peer frequencies after various durations ($ n = 4096, d = 5 $).}
		\label{fig:nodes_frequencies_different_t_ecdf}
	\end{subfigure}%
    \begin{subfigure}{.38\linewidth}
		\centering
		\includegraphics[width=\columnwidth]{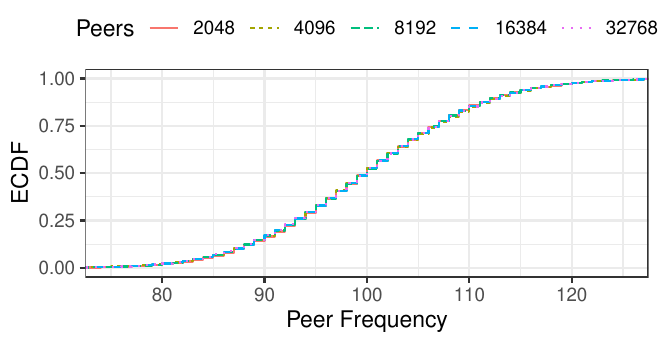}
        \captionsetup{width=.9\linewidth}
        \caption{Peer frequencies for different network sizes ($ d = 5, T = 5 $).}
		\label{fig:nodes_frequencies_different_n_ecdf}
	\end{subfigure}%
    \begin{subfigure}{.26\linewidth}
		\centering
        \footnotesize
		\begin{tabular}{c|c|c}
        \toprule
        \textbf{Peers ($n$)} & \textbf{\makecell{KS-test \\ distance}} & \textbf{\makecell{KS-test \\ $p$-value}} \\ \hline
        \num{2048} & 0.020 & 0.79 \\
        \num{4096} & 0.011 & 0.96 \\
        \num{8192} & 0.012 & 0.65 \\
        \num{16384} & 0.006 & 0.88 \\
        \num{32768} & 0.003 & 0.97 \\
        \bottomrule
    \end{tabular}
        \captionsetup{width=.9\linewidth}
        \caption{KS-test statistics associated with plot (b), for different values of $ n $.}
		\label{tab:ks_test_large_topologies}
	\end{subfigure}%
	\caption{ECDFs of peer frequencies' distributions across diverse experiment durations and larger topologies (up to \num{32678} nodes). For $T\geq 3$, ECDFs are visually indistinguishable from those obtained through uniform sampling. }
    \vspace{-0.3cm}
	\label{fig:emperical_evaluation}
\end{figure*}

\subsection{Neighborhood Frequencies for Small-scale Networks}
\label{subsec:exp_varying_connectivity}
We first explore how the distance between neighborhood frequencies in \sys and a uniform distribution is influenced by the network connectivity (measured by spectral gap), the initial position of peers in the graph, and the execution time of \sys.

\textbf{Setup.}
We generate $d$-regular topologies, run \sys on each topology for a period of $ T $ seconds, and observe the neighborhood of each peer after the experiment ends. 
There are $ {{n - 1} \choose {d} }$ possible neighborhoods for a particular peer. 
For a $d$-regular topology with $ n $ peers, we run each experiment $ \left \lceil {{n - 1} \choose {d}} \times \frac{100}{d} \right \rceil $ times.
This ensures that, on average, each neighborhood is observed \num{100} times.
We fix $ d = 4 $ and $ n = 64 $.

Since we aim to analyze how the spectral gap affects the convergence speed of \sys but cannot directly generate topologies with a particular spectral gap, we do the following.
We first generate $ 10^6 $ $4$-regular topologies with 64 peers, compute for each topology its spectral gap $ \lambda $, and select ten of these topologies such that their spectral gaps somewhat uniformly cover values in the range $ \lambda = [0.07, 0.22] $ (these values are the lowest and highest $ \lambda $ we found throughout topology generation).
A higher spectral gap $ \lambda $ indicates a better-connected topology, and we aim to understand its impacts on the convergence of \sys empirically.
We run \sys on these topologies for $T = 4$ and $ T = 6$ seconds while recording the neighborhood frequencies for \emph{each} peer.
We then run a KS test for each previously received neighborhood distribution.

\textbf{Results.}
Figure~\ref{fig:ks_distances} shows the variance in KS-test distances with a boxplot for each tested topology and for $ T = 4 $ (left) and $ T= 6 $ (right).
Each boxplot consists of 64 values, each associated with the KS-test distance computed using the neighborhood distribution of a peer in the topology.
On the one hand, for the topologies with a small spectral gap (\eg, $\lambda = 0.07$ or $ \lambda = 0.088 $), we observe relatively high KS-test distances for $ T = 4$ and small $p$-values ($ p \leq 0.05 $) for individual KS tests, meaning that for these topologies the neighborhood frequencies cannot be reasonably assumed to have converged to a uniform distribution yet\footnote{For the topology with $ \lambda = 0.088 $, we find a disproportionate median KS-test distance of 0.15 with $ T = 4 $. For presentation clarity, we omitted the corresponding boxplot from Figure~\ref{fig:ks_distances}.}.
On the other hand, for topologies with $ \lambda \geq 0.18 $, we find large $p$-values, suggesting that \sys on these topologies has converged at $ T = 4 $.
Most of the KS-test distances are relatively low (below 0.01).
The KS-test distances corresponding to $ T = 6 $ are lower compared to $ T = 4 $, and we find that 89.9\% of all $p$-values produced by the KS test are higher than 0.05.
Furthermore, we observe a low variance of KS-test distances for a particular topology and $ T = 6 $, indicating that the initial starting position of the peer in the graph does not affect its convergence much.
This suggests that the neighborhoods of peers indeed approach a uniform sample over time.

\textbf{Conclusion.} These experiments demonstrate that \sys requires less time to generate uniformly random peer samples on networks with better connectivity (\ie, higher spectral gaps).
For most $4$-regular topologies with $2^6=64$ peers, and irrespective of the initial starting position of peers, the neighborhood distribution closely approximates uniformity after just 6 seconds.
This motivates us to focus on specific topologies and observed peers in further experiments.

\subsection{Peer Frequencies for Large-scale Networks}
\label{subsec:exp_convergence_large_topologies}
Since ensuring coverage of all unique neighborhoods quickly becomes an intractable problem when the network size increases, we have limited our empirical analysis in Section~\ref{subsec:exp_varying_connectivity} to relatively small topologies ($ n = 64 $).
To evaluate \sys with larger topologies, we count the frequencies on a per-peer instead of a per-neighborhood basis. 

\begin{figure*}[t]
\vspace{-0.3cm}
\centering
\includegraphics[width=\linewidth]{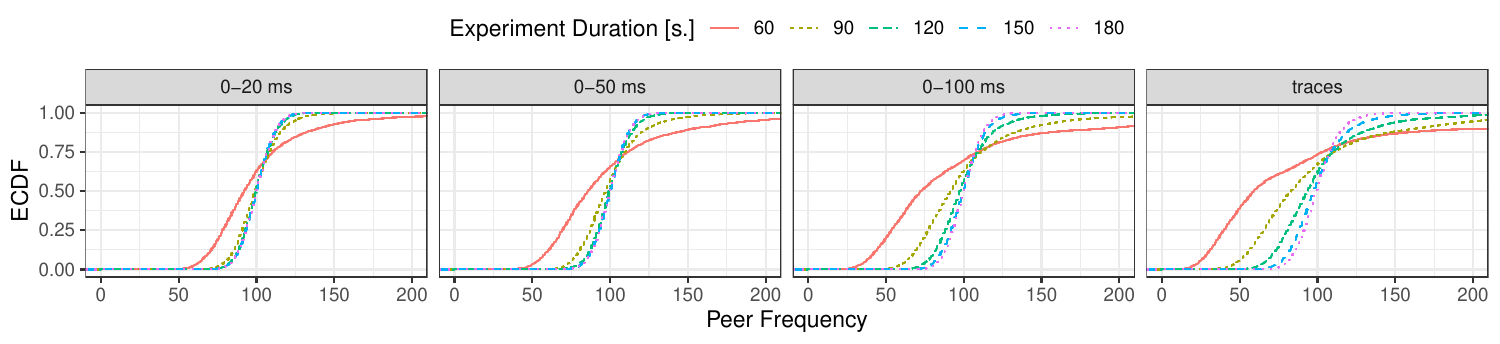}
\caption{Distribution of peer frequencies for different $ \delta_{max} \in \{20, 50, 100\} $ values (in milliseconds) and with realistic traces applied (right-most plot), for different experiment durations.}
\label{fig:nodes_frequencies_different_t_latencies_ecdf}
\vspace{-0.3cm}
\end{figure*}

\textbf{Setup.}
Similar to the previous experiments, we repeat each experiment until, on average, each peer is observed \num{100} times.
Also, based on the outcome of our previous experiment in Section~\ref{subsec:exp_varying_connectivity}, we established that results do not vary significantly, either for different observed peers or for topologies with different but relatively high connectivity.
Therefore, in this set of experiments, we average the results for each pair of $(n, d)$ over five different random topologies while examining one random peer.
We focus on how KS-test distances change with different network sizes $n$ and durations of the \sys execution $T$.

\textbf{Results.}
Figure~\ref{fig:nodes_frequencies_different_t_ecdf} shows the empirical cumulative distribution function (ECDF) of peer frequencies for topologies with $ n = \num{4096} $, $ d = 5$ and with five different experiment durations $ T $. We visually notice that the distribution of peer frequencies quickly approaches uniform sampling, \eg, peer frequencies become more concentrated around \num{100}. Figure~\ref{fig:nodes_frequencies_different_n_ecdf} shows the distribution of peer frequencies for increasing values of $ n $ up to $ \num{32768} $ ($ 2^{15} $) peers while fixing $ d = 5 $ and $ T = 5 $.
We observe that all network sizes yield peer frequencies visually indistinguishable from one another and from those obtained through uniform sampling. Finally, we summarize the KS-test distances for each value of $ n $ and topology in Figure~\ref{tab:ks_test_large_topologies}. We can see that the KS-test distance ranges from 0.003 to 0.02 and provides large $p$-values for all tests.

\textbf{Conclusion. } For the evaluated large-scale topologies, the distribution of peer frequencies approaches that of uniform sampling in a short time.
This demonstrates the convergence of \sys in large networks.

\begin{figure}[b]
\centering
\includegraphics[width=\linewidth]{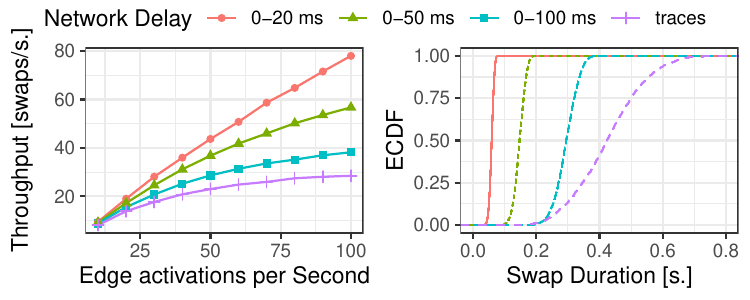}
\caption{The throughput of \sys (in swaps/s.) for increasing frequencies of edge activation (left) and the distribution of individual swap durations (right), for different network delays.}
\label{fig:peerswap_with_network_delays}
\end{figure}

\subsection{Convergence and Throughput of \sys with Network Delays}
\label{subsec:convergence_peerswap_with_network_delays}
In this part, we explore the convergence and performance of \sys in the presence of network delays. 
We use the adapted protocol as described in Section~\ref{sec:peerswap_with_delays}. 
Our goal is to see if the distribution of peer samples provided by \sys converges to the uniform one even if messages are delayed.
Additionally, we believe that adjusting the main parameter of \sys, the Poisson clocks rate $\alpha$, can result in more swaps per second and thus help to improve the convergence speed. 
To explore this, we specifically focus on the average throughput of \sys, \ie, the average number of (successful) swaps per second, for varying network delays and edge activation frequencies.

\textbf{Setup.}
We consider a network of $ n = 1024 $ peers, connected in a $d$-regular topology with $ d = 5 $.
For each pair of peers, we generate pair-wise network delays through uniformly drawing a delay between the interval $ [0, \delta_{max}] $.
This set of experiments considers three synthetic delay values -- those of $ \delta_{max} \in \{20, 50, 100\}$ ms.
We also run a fourth configuration with realistic network traces that are sourced from WonderNetwork~\cite{WonderNetwork}.
For the sake of presentation, we present results with respect to the average number of edge activations in the system, denoted by $r$. 
This value depends on the total number of edges in the graph and Poisson clock rate $\alpha$.

First, we run a similar experiment to Section~\ref{subsec:exp_convergence_large_topologies} in the presence of delays. We consider various durations of $ T \in\{60, 90, 120, 150, 180 \}$ seconds and derive the distribution of peer frequencies provided by samples of \sys.
Here, we adjust the Poisson clock rate $\alpha$ in such a way that on average \num{50} swaps per second are initiated.

Second, we control the interconnected values of $\alpha$ and $r$, and estimate the influence of Poisson clocks' rates on the \sys throughput.
We vary $r$ from \num{10} to \num{100} edge activations per second. Our goal is to estimate 
\sys throughput, defined as the average number of successful swaps per second. Higher is the throughput, quicker is the convergence of \sys.
We run each experiment for two minutes and repeat for five seeds.

\textbf{Convergence results.}
Figure~\ref{fig:nodes_frequencies_different_t_latencies_ecdf} shows the empirical cumulative distribution function (ECDF) of peer frequencies with varying delays $ \delta_{max}$ and when using traces.
Comparing the plots in Figure~\ref{fig:nodes_frequencies_different_t_latencies_ecdf}, we can see that it takes longer for \sys to converge when the larger delays or realistic traces are applied.
This is due to the lower throughput of \sys with realistic traces and large delays than with smaller delays.

\textbf{Throughput results.}
Figure~\ref{fig:peerswap_with_network_delays} (left) shows the throughput of \sys, for different values of $ \delta_{max} $ and with our realistic traces.
Increasing the swap frequency generally increases the achievable throughput.
We find this increase to flatten out, which is particularly noticeable for the experiments with larger network delays.
This can be explained by two reasons. 
First, the probability of two \enquote{conflicting} swaps being initiated at the same time increases as more swaps are initiated within the same time period.
Second, when network delays increase, each swap takes longer to finish, resulting in prolonged periods of nodes being locked.
Figure~\ref{fig:peerswap_with_network_delays} (right) shows the distribution of swap durations for each network delay profile, indeed confirming our observation. 
While throughput optimization is not the main focus of our paper, modifying the Poisson clock rate is a tangible way of improving in performance, especially if message delays are mild (\ie, up to 20 ms).
This comes, however, at the cost of additional network traffic.

\textbf{Conclusion. }
Our empirical results hint that \sys can reach quick convergence even in challenging network conditions, \eg, when deployed on the Internet.
Also, adjustment of the Poisson clocks rate can significantly improve the throughput and protocol convergence.

\section{Concluding remarks}
\label{section:future}
Our work makes the first step towards establishing gossip-based peer samplers with provable randomness guarantees.
We introduced \sys, a protocol which provides each peer with a random sample within a time frame determined by the network's size and connectivity.
The idea underlying \sys is simple:  two adjacent peers periodically swap their entire neighborhood. The theoretical analysis, which is more challenging, has been made possible by establishing a link between the dynamics of peers in \sys and interchange processes.
We also introduced a variant of \sys that operates in the presence of network delays.
Our numerical evaluation using a simulated version of \sys conveys the convergence of \sys over time, thus confirming that it is an efficient and scalable peer sampler.

\section*{Acknowledgment} 
The authors are thankful to Roberto Oliveira for his insights about interchange processes and Laurent Massoulie, Nirupam Gupta, Akash  Dhasade and  Rishi Sharma for fruitful discussions and proofreading the paper. 

\bibliographystyle{acm}
\bibliography{biblio}

\section{Postponed Proofs}
\label{appendix}
\subsection{Additional preliminaries}

\textbf{Random walks}
A \emph{random walk} $RW(G)$ on a graph $G=(V, E)$ is a ctMC with state space $V$ that describes one peer randomly moving from node to node in a graph $G$, following the existing edges of $G$. Formally as per \cite{oliveira2013mixing}, random walk can be described formally by its infinitesimal generator using transposition function \ref{eq:transition_f-1}. For any distinct $u, v\in V$:
\begin{equation}
    q(u, v) = 
    \begin{cases}
        1,&\text{if $\exists e\in E$ s.t. $f_e(u)=v$,}\\
        0,&\text{otherwise. }
    \end{cases}
\end{equation}

\textbf{Mixing times of RW and IP.} The mixing time of a $RW(G)$ can be bounded using a  measure of the graph connectivity, its spectral gap $\lambda(G)$. 

\begin{lemma}
\label{lemma:rw-mixing-time}
Consider a continuous-time random walk $X$ on a $d$-regular graph $G=(V, E)$, $|V|=n$. Then, for any $\varepsilon >0$ the $\varepsilon$-mixing time of $X$ can be bounded as follows:
    \[T_X(\varepsilon)\leq \log\left(\frac{n}{\varepsilon }\right)\frac{1}{d\cdot \lambda(G)}.\]
\end{lemma}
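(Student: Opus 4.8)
The statement to prove is that for a continuous-time random walk $X$ on a $d$-regular graph $G = (V, E)$ with $|V| = n$, the $\varepsilon$-mixing time satisfies $T_X(\varepsilon) \le \log(n/\varepsilon)\cdot \frac{1}{d\lambda(G)}$. The plan is to reduce everything to a spectral decomposition of the transition semigroup $P(t) = e^{tQ}$, where $Q$ is the infinitesimal generator of the random walk described in the excerpt.

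First I would identify the generator explicitly. From the definition via the transposition functions, for distinct $u,v$ we have $q(u,v) = $ the number of edges $e$ with $f_e(u) = v$, which (since $G$ is $d$-regular and simple) equals $1$ if $uv \in E$ and $0$ otherwise; hence $q(u,u) = -d$. So $Q = A_G - dI$, where $A_G$ is the adjacency matrix. Since $G$ is $d$-regular, the stationary distribution is uniform, $\pi = (1/n,\dots,1/n)$, and $Q$ is a symmetric matrix. I would write $Q = d(W_G - I)$ where $W_G = \frac{1}{d}A_G$ is the normalized adjacency matrix (here $D = dI$, so $W_G = D^{-1/2}A_G D^{-1/2}$ matches the excerpt's definition). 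Its eigenvalues are $1 = \lambda_1 \ge \lambda_2 \ge \dots \ge \lambda_n \ge -1$, so the eigenvalues of $Q$ are $d(\lambda_i - 1) \le 0$, with the top one equal to $0$ (eigenvector $\mathbf{1}$) and all others at most $-d\lambda(G)$ in the sense that $d(\lambda_i - 1) \le -d\lambda(G)$ for $i \ge 2$ — this uses $\lambda_i \le 1 - \lambda(G)$ for $i\ge 2$, which follows directly from the definition $\lambda(G) = 1 - \max(|\lambda_2|,|\lambda_n|)$.

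Next, the standard $L^2 \to$ total variation step. Expand $P_{u,\cdot}(t)$ in the orthonormal eigenbasis $\{\varphi_i\}$ of $W_G$ (equivalently of $Q$), giving $\frac{P_{u,v}(t)}{\pi_v} = \sum_{i=1}^n e^{d(\lambda_i-1)t}\varphi_i(u)\varphi_i(v)\cdot n$ after the appropriate normalization; the $i=1$ term is $1$. Then bound the $\chi^2$-type quantity $\sum_v \pi_v\left(\frac{P_{u,v}(t)}{\pi_v} - 1\right)^2 = \sum_{i\ge 2} e^{2d(\lambda_i - 1)t}\varphi_i(u)^2 \cdot n \le e^{-2d\lambda(G)t}\sum_{i\ge2}\varphi_i(u)^2 n \le n\, e^{-2d\lambda(G)t}$, using $\sum_i \varphi_i(u)^2 = \|\delta_u\|^2$ appropriately normalized (here one must be careful with the $\pi$-weighted inner product; the bound $\sum_{i\ge 1}\varphi_i(u)^2 \le 1/\pi_u = n$ is the clean form). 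Finally apply Cauchy–Schwarz: $d_{TV}(P_{u,\cdot}(t),\pi) = \frac12\sum_v |P_{u,v}(t) - \pi_v| \le \frac12\sqrt{\sum_v \pi_v\left(\frac{P_{u,v}(t)}{\pi_v}-1\right)^2} \le \frac12\sqrt{n}\,e^{-d\lambda(G)t}$. Setting the right side $\le \varepsilon$ and solving gives $t \ge \frac{1}{d\lambda(G)}\log\!\big(\frac{\sqrt n}{2\varepsilon}\big)$, which is implied by $t \ge \frac{1}{d\lambda(G)}\log(n/\varepsilon)$ since $\sqrt n/2 \le n$ for $n\ge 1$; taking the max over $u$ yields the claim.

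The main obstacle is bookkeeping the two inner products consistently — the combinatorial/uniform one versus the $\pi$-weighted one — and making sure the eigenvector normalization is handled correctly so that the $\sum_i \varphi_i(u)^2 \le n$ bound is legitimate; this is where sign errors and stray factors of $n$ tend to creep in. A cleaner route that sidesteps some of this is to work with the substochastic-free symmetric kernel directly: since $\pi$ is uniform, $P(t)$ is a symmetric matrix, so one can use the plain Euclidean norm throughout, write $\|P_{u,\cdot}(t) - \pi\|_2^2 \le e^{-2d\lambda(G)t}\|\delta_u - \pi\|_2^2 \le e^{-2d\lambda(G)t}$, and then relate $\|\cdot\|_1$ to $\|\cdot\|_2$ via $\|x\|_1 \le \sqrt n \|x\|_2$. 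This gives $d_{TV} \le \frac{\sqrt n}{2} e^{-d\lambda(G)t}$ with minimal fuss, and I would present it this way. Everything else is routine, so I would keep the write-up short and cite \cite{levin2017markov} or \cite{oliveira2013mixing} for the spectral form of the semigroup bound.
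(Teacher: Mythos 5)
Your proof is correct, but it takes a different (more self-contained) route than the paper. The paper's proof is a two-line application of a black-box result, Theorem~20.6 of~\cite{levin2017markov}, which states $T_X(\varepsilon)\leq \log\bigl(\tfrac{1}{\varepsilon\,\min_i \pi_i}\bigr)\tfrac{1}{r\,\lambda(P)}$ with $r=\max_s\sum_{s'\neq s}q(s,s')$ and $P=Q/r+I$; it then just computes $\pi_i=1/n$, $r=d$, and observes that $Q/d+I=W_G$, so $\lambda(P)=\lambda(G)$, which gives exactly the stated bound. You instead re-derive this estimate from scratch in the special case at hand: you identify $Q=A_G-dI=d(W_G-I)$, use symmetry of $P(t)=e^{tQ}$ and the uniform stationary distribution, bound all nontrivial eigenvalues of the semigroup by $e^{-d\lambda(G)t}$ (your inequality $\lambda_i\leq 1-\lambda(G)$ for $i\geq 2$ is valid given the paper's definition of the spectral gap via $\max(|\lambda_2|,|\lambda_n|)$), and convert the $L^2$ bound to total variation by Cauchy--Schwarz, getting $d_{TV}\leq \tfrac{\sqrt{n}}{2}e^{-d\lambda(G)t}$ and hence the claim. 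The eigenvector-normalization bookkeeping you flag is handled correctly (with uniform $\pi$ the Euclidean and $\pi$-weighted calculations agree, and $\sum_i\varphi_i(u)^2=1$ in the orthonormal basis), and your argument in fact yields the slightly sharper threshold $\tfrac{1}{d\lambda(G)}\log\bigl(\tfrac{\sqrt{n}}{2\varepsilon}\bigr)$, which implies the lemma. The trade-off: the paper's proof is shorter and leans on a general ctMC theorem whose hypotheses must be matched (this is where the factor-$d$ speedup relative to the discrete walk enters, via $r=d$), while yours is elementary and transparent but only exploits the regular, symmetric setting; both are sound.
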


\begin{proof}
Theorem 20.6~\cite{levin2017markov} proves that the mixing time of any ctMC $X$ with set of states $S$ and stationary distribution $\pi=(\pi_i)_{i\in S}$ can be bounded as follows:
        \begin{equation}\label{eq:mixing}
        T_{X}(\varepsilon)\leq \log\left(\frac{1}{\varepsilon \min\limits_{i\in S}(\pi_i)}\right)\frac{1}{r\cdot \lambda(P)},
        \end{equation}
    where $r=\max\limits_{s\in S}\sum\limits_{s'\in (S\backslash \{s\})}q(s, s')$, and $P=Q/r+I$. When $X$ is a random walk, we can calculate all these values. Specifically \footnote{Analogue of the discrete variant described in Section 1.5~\cite{levin2017markov}}, $\pi_i = \frac{\deg(i)}{2|E|}$ for $\forall i\in V$, which in case of $d$-regular graph on $n$ nodes gives  $\min\limits_{i\in S}(\pi_i)=\frac{1}{n}$. Furthermore, for any $i\in V$ $\sum\limits_{j\in (V\backslash \{i\})}q(i, j)=d$ and thus $r=d$.  We also note that $Q/r+I=W_G$, hence the spectral gap of $P = W_G$. Substituting these values in \eqref{eq:mixing} concludes the lemma. 
\end{proof}

\begin{remark}
    A reader may notice that, compared to a discrete random walk~\cite{levin2017markov}[Theorem 12.4], the mixing time for the continuous analogue is reduced by a factor of $d$, the degree of the graph. This reduction occurs because, in the continuous-time random walk, the transitions between nodes happen $d$ times faster, even though the total number of transitions required to achieve mixing remains the same.
\end{remark}

Mixing time of an interchange process can be bounded by the mixing time of a random walk on the same graph. 

\begin{theorem}[Theorem 1.1 from ~\cite{oliveira2013mixing}]\label{thm:interchange-rw} 
    There exists a universal constant $C>0$ such that for all $\varepsilon\in (0, 1/2)$, all connected graphs $G = (V,E)$, with $|V|\geq 2$, and for all $k\in\{1, \dots, |V|-1\}$, if we consider $X$ to be a $RW(G)$ and $X'$ to be an $IP(k, G)$, then
    \[T_{X'}(\varepsilon) \leq C\ln \left( \frac{|V|}{\varepsilon} \right) T_{X}(1/4).\]
\end{theorem}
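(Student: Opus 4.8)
The plan is to follow Oliveira's strategy \cite{oliveira2013mixing} and bound the mixing of $X'=IP(k,G)$ through an iterated, dimension-free contraction whose single step is controlled by the single-particle random walk $X=RW(G)$, so that only $O(\ln(|V|/\varepsilon))$ steps are needed. Two structural facts make $X$ relevant. First, the stationary law of $IP(k,G)$ is uniform on $(V)_k$, since every edge transposition $f_e$ (as in \eqref{eq:transition_f-1}) preserves this measure. Second, the marginal law of any single labelled particle in $IP(k,G)$ is exactly $RW(G)$, because forgetting all but one coordinate turns the edge-transposition dynamics into the single-site motion defining $X$. It is precisely this marginal identity that allows the single-particle mixing time $T_X(1/4)$ to enter the final bound rather than some $k$-dependent quantity.

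The core engine is the graphical (basic) coupling: run two copies of the dynamics using the \emph{same} Poisson clocks on the edges of $G$, one started from an arbitrary configuration and one from stationarity. For the symmetric exclusion projection of $IP(k,G)$ -- the process recording only which vertices are occupied by a fixed set of labels -- the number of sites at which the two copies disagree is non-increasing, each such discrepancy moves like a copy of $X$, and two discrepancies annihilate when they meet. The key quantitative step, which is the heart of the argument, is to show that over a time window of length $c\,T_X(1/4)$ the expected number of discrepancies shrinks by a universal factor $\theta<1$, with $c$ and $\theta$ independent of $k$, of $G$, and of the current number of discrepancies. Iterating this contraction across $O(\ln(|V|/\varepsilon))$ windows, starting from at most $2k\le 2|V|$ initial discrepancies, drives the total-variation distance below $\varepsilon$ and produces exactly the factor $\ln(|V|/\varepsilon)$.

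Two reductions then complete the argument. First, the coalescence estimate is stated for the exclusion (indistinguishable-particle) projection, so to recover the full interchange process one controls every two-colouring of the labels simultaneously: each lumping of the labels into ``occupied/empty'' evolves as a symmetric exclusion process to which the contraction applies, and a telescoping over these colourings reconstructs the joint law on $(V)_k$ without reintroducing a factor of $k$. Second, the single-particle input $T_X(1/4)$ must be amplified to the accuracy required inside each window, which is routine by submultiplicativity of the single-particle chain and only costs universal constants absorbed into $C$.

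The main obstacle is precisely the dimension-free constant-factor decay of discrepancies per window. Spectral-gap information alone -- even the sharp identity $\lambda(IP)=\lambda(RW)$ from Aldous's conjecture -- only yields $t_{\mathrm{mix}}\lesssim \ln(1/\pi_{\min})/\lambda = O(k\ln|V|)\,T_X(1/4)$, which carries the wrong, dimension-dependent factor $k$; removing this factor is the entire content of the theorem. Establishing the uniform-in-$k$ coalescence rate requires more than a first-moment bound: one must rule out configurations in which discrepancies persist because the exclusion constraint keeps them from meeting, which Oliveira handles through a careful second-moment / evolving-set style estimate that exploits the negative association of the exclusion dynamics. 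I expect this coalescence estimate, together with verifying that the colour-lumping reduction preserves the logarithmic factor uniformly in $k$, to be the delicate part, while the remaining ingredients (stationarity, the single-particle marginal, amplification, and the final iteration) are comparatively routine.
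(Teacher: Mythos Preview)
The paper does not prove this statement at all: it is quoted verbatim as ``Theorem 1.1 from~\cite{oliveira2013mixing}'' and used as a black box in the proof of Theorem~\ref{th:main-2}. There is therefore no ``paper's own proof'' to compare against; your sketch is a reasonable outline of Oliveira's original argument, but the present paper simply cites the result and moves on.
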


\subsection{Proofs of Section~\ref{section:interchange}}
\label{appendix:proofs1}

\FirstBijection*
\begin{proof}
   
    Let $t\geq0$ and denote by $N\geq 0$ the number of times the Poisson clocks rang before time $t$.
    If $N=0$, we set $\sigma_t$ to be the identity. Otherwise, we denote the times at which any of the clocks rang by $t_1, t_2, \dots, t_N$ with $0=t_0<t_1<\dots<t_N\leq t$ and we denote $e_m=(i_m, j_m)\in E(t_{m-1})$ the edge on which the clock rang at time $t_m$. Then we set 
    \begin{equation}\label{eq:bijection}
     \sigma_t = f_{e_N} \circ \dots \circ f_{e_1}
    \end{equation}
    where for every $m\in \{1, \dots ,N\}$, $f_{e_m}$ is as defined in \eqref{eq:transition_f-1}.
    Note that for any $N \geq 1$, $m \in \{1, \dots, N\}$ and $e_m \in E(t_{m-1})$, $f_{e_m}$ is a bijection, hence $\sigma_t$ is also a bijection (as a composition of bijections). Furthermore, by construction we have that $(i,j) \in E(t)$ if and only if $(\sigma_t^{-1}(i), \sigma_t^{-1}(j))\in E(0)$ for any $t\geq 0$. Hence, we finally get that $C_i(j)=C_e$ with $e=(\sigma_t^{-1}(i),\sigma_t^{-1}(j)) \in E(0)$.

\end{proof}

\MainBijection*

\begin{proof}
Since $\sigma_t = \sigma_{t_N}$ and $\gamma_t=\gamma_{t_N}$, it is sufficient to prove the lemma statement for $t_N$. We will do it by induction over $N$. We know that $\eh_1=e_1\in E(0)$ and by construction of $f_e$ for any $e\in E(0)$ holds $f_e=f^{-1}_e$. Then for $N=1$:
\[\sigma^{-1}_{t_1} = f^{-1}_{e_1} = f_{e_1} = f_{\eh_1}= \gamma_{t_1}\]
and the lemma statement holds. Let $M \geq 1$ be an arbitrary value. We assume that for all $N\leq M$ the lemma holds. We show below that, in that case, it also holds for $N=M+1$. 

First recall that  $\sigma_{t_{M+1}}= f_{e_{M+1}} \circ \dots \circ f_{e_1}$, and by construction of $f_e$ for any $e_m\in E(t_{m})$, $f_{e_m}=f^{-1}_{e_m}$. Then 
\begin{align*}
    \sigma^{-1}_{t_{M+1}}= f^{-1}_{e_1} \circ \dots \circ f^{-1}_{e_{M+1}}= f_{e_1} \circ \dots \circ f_{e_{M+1}}.
\end{align*}
      
Thus $\sigma^{-1}_{t_{M+1}} = \sigma^{-1}_{t_M}\circ f_{e_{M+1}}$ with $e_{M+1}=(i_{M+1}, j_{M+1})\in E(t_{M})$. Therefore, for any $v\in V$
      \begin{equation}\label{eq:transition_sigma}
        \sigma^{-1}_{t_{M+1}}(v) = \begin{cases}
            \sigma^{-1}_{t_{M}}(v),&\text{if $v\notin \{i_{M+1}, j_{M+1}\}$,}\\
            \sigma^{-1}_{t_{M}}(i_{M+1}),&\text{if $v=j_{M+1}$,}\\
            \sigma^{-1}_{t_{M}}(j_{M+1}),&\text{if $v=i_{M+1}$.}\end{cases}
        \end{equation}
      We want to prove that $\sigma^{-1}_{t_{M+1}} = \gamma_{t_{M+1}}$. When we decompose $ \gamma_{t_{M+1}}$ and use induction assumption that $\sigma^{-1}_{t_{M}} = \gamma_{t_{M}}$, we get:
      \begin{align*}
        \gamma_{t_{M+1}} &= f_{\eh_{M+1}} \circ \dots \circ f_{\eh_1} = f_{\eh_{M+1}}\circ \gamma_{t_M} =f_{\eh_{M+1}}\circ \sigma^{-1}_{t_M}, 
      \end{align*} 
      where 
      \begin{equation}\label{eq:transition_sigma_2}
          \eh_{M+1} = (\ih_{M+1}, \jh_{M+1}) = (\sigma^{-1}_{t_{M}}(i_{M+1}), \sigma^{-1}_{t_{M}}(j_{M+1})).
      \end{equation} By definition, $f_{\eh_{M+1}}$ changes only those values that are neither $\ih_{M+1}$ nor $\jh_{M+1}$.
      Then 
      \begin{equation}\label{eq:transition_sigma_3}
        \gamma_{t_{M+1}}(v) = \begin{cases}
            \sigma^{-1}_{t_{M}}(v),&\text{if $\sigma^{-1}_{t_{M}}(v)\notin\{\ih_{M+1}, \jh_{M+1}\}$,}\\
            \jh_{M+1},&\text{if $\sigma^{-1}_{t_{M}}(v)=\ih_{M+1}$,}\\
            \ih_{M+1},&\text{if $\sigma^{-1}_{t_{M}}(v)=\jh_{M+1}$.}\end{cases}
        \end{equation}
        Substituting  $\ih_{M+1}, \jh_{M+1}$ from \eqref{eq:transition_sigma_2} into \eqref{eq:transition_sigma_3}, and noting that (since the mapping is a bijection) $v=u$ if and only if $\sigma_{t_M}^{-1}(v)=\sigma_{t_M}^{-1}(u)$ shows that \eqref{eq:transition_sigma} and \eqref{eq:transition_sigma_3} are equivalent. This concludes the lemma. 
\end{proof}

\begin{lemma}\label{lemma:time-properties}
     Consider Poisson clocks $(C_e)_{e\in E}$ assigned to edges in \sys process on the network $G=(V, E)$. Let Poisson ring times on each edge $e\in E$ be $(T^e_n)_{n \geq 1}$. For any $0\leq t_1<t_2$, denote by $T^e[t_1, t_2]$ a sequence of time period values $t'\leq t_2-t_1$ such that $C_e$ ringed at time $t_1+t'$ within time period $[t_1,t_2]$, i.e. $T^e[t_1, t_2]:=\{t': 0\leq t'< t_2-t_1, t'+t_1\in (T^e_n)_{n \geq 1} \}$. Denote $T[t_1, t_2]:=(T^e[t_1, t_2])_{e\in E}$, $0\leq t_1<t_2$. Then, for any $0 \leq t_1< t_2<t_3$
     \begin{itemize}
         \item $T[t_1, t_2]$ and $T[t_2, t_3]$ are independent;
         \item $T[t_2, t_3]$ is time-homogeneous.
    \end{itemize}
\end{lemma}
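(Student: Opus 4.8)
The plan is to reduce this to the defining property of Poisson clocks, namely that they are built from i.i.d.\ exponential increments with the memorylessness property stated just before Definition~\ref{def:pois_clock}. The object $T^e[t_1,t_2]$ is exactly the set of arrival times (shifted to the origin $t_1$) of the Poisson clock $C_e$ that fall strictly inside $[t_1,t_2)$, so $T[t_1,t_2]$ is the collection of all such sets over the independent family $(C_e)_{e\in E}$. Since the clocks on distinct edges are independent by construction, it suffices to prove both claims for a single fixed edge $e$ and then take the product over $e\in E$; I would state this factorization at the outset so the rest of the argument is one-dimensional.

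For a single Poisson clock, I would first recall that its ring times $(T^e_n)_{n\geq1}$ form a (homogeneous) Poisson process of rate $\alpha$, i.e.\ the increments $Z_k = T^e_k - T^e_{k-1}$ are i.i.d.\ $\mathcal E(\alpha)$. The key step is a renewal/restart argument: condition on the value $s$ of the last ring time before $t_2$ (or on there being none before $t_2$, in which case we condition on the time elapsed since the last ring before $t_1$, or on the clock not having rung at all). Let $R$ denote the \emph{residual time} from $t_2$ to the next ring after $t_2$. By memorylessness of the exponential increment straddling $t_2$, $R \sim \mathcal E(\alpha)$ regardless of the configuration of rings in $[t_1,t_2)$; and conditionally on $R$, the subsequent increments after that next ring are again i.i.d.\ $\mathcal E(\alpha)$, independent of everything up to $t_2$. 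Hence the point process of rings strictly after $t_2$ — equivalently $T^e[t_2,t_3]$ as a function of $t_3$ — has the same law as a fresh Poisson clock of rate $\alpha$ started at $t_2$, and is independent of the rings in $[t_1,t_2)$, i.e.\ of $T^e[t_1,t_2]$. This gives both bullet points simultaneously for the single edge: independence of $T^e[t_1,t_2]$ and $T^e[t_2,t_3]$, and the fact that the law of $T^e[t_2,t_3]$ depends only on the length $t_3-t_2$ (time-homogeneity), not on $t_2$ itself or on the past. Taking products over $e\in E$ via independence of the clock family lifts this to $T[t_1,t_2]$ and $T[t_2,t_3]$.

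I would then close the time-homogeneity claim by making the statement precise: for any Borel set $B$ of finite point configurations and any $h\geq0$, $\Pr[T[t_2,t_3]\in B] = \Pr[T[0,t_3-t_2]\in B]$, which follows because each $T^e[t_2,t_3]$ has, by the restart argument above, the law of the first $t_3-t_2$ time units of a Poisson clock launched at time $t_2$, and a Poisson clock launched at any time has the same increment law as one launched at $0$. The main obstacle — really the only subtle point — is handling the increment of the clock that \emph{straddles} the boundary times $t_1$ and $t_2$: one has to be careful that the "time since the last ring before $t_1$" is not exponentially distributed (it is the age of a renewal interval), but the \emph{forward} residual times at $t_1$ and at $t_2$ both are, by memorylessness, and it is only the forward structure after $t_2$ that enters $T[t_2,t_3]$. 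Being explicit about conditioning on the ring configuration in $[t_1,t_2)$ and invoking memorylessness of the straddling exponential is what makes the independence rigorous; everything else is bookkeeping over the independent edge-indexed family.
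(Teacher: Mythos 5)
Your proposal is correct and follows essentially the same route as the paper: prove the claim per edge by observing that the forward residual time at $t_2$ is exponential and independent of all rings before $t_2$ (the paper cites a standard Poisson-process theorem for this, where you re-derive it via the memorylessness/restart argument), note that the subsequent increments are fresh i.i.d.\ exponentials constrained only by the window length $t_3-t_2$, and then lift to $T[t_1,t_2]$, $T[t_2,t_3]$ by independence of the edge-indexed clocks. No substantive difference.
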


\begin{proof}
    First, we show the lemma statement for $T^e, \forall e\in E$, that for any $0 \leq t_1< t_2<t_3$ holds that $T^e[t_2, t_3]$ and $T^e[t_1, t_2]$ are time-homogeneous and independent from each other. The random variable measuring the length of the interval from $t_2$ until the first ring time of $C_e$ after $t_2$ is a non-negative random variable $\hat{t}$ with exponential distribution $\mathcal{E}(1)$. $\hat{t}$ is independent of all ring times before $t_2$ (Theorem 2.2.1~\cite{poisson_processess}), and thus independent of $T^e[t_1, t_2]$.
    According to the definition of Poisson Clock, ring times on edge $e$ are defined by a sequence of independent exponential random variables $(Z_k)_{k\geq 0}$ and $t_1+\hat{t}=\sum\limits_{k\in[1, l]}Z_k$ for some $l$. Then, next values in $T^e[t_2, t_3]$ are of form
    \begin{equation}\label{eq:time1}
        \hat{t}+\sum\limits_{k\in[l+1, m]}Z_k\leq t_3-t_2.
    \end{equation} Since $\hat{t}$ and $(Z_k)_{k\geq l}$ are independent of $T^e[t_1, t_2]$, then $T^e[t_2, t_3]$ is independent with $T^e[t_1, t_2]$. Moreover, $\hat{t}$ and $(Z_k)_{k\geq l}$ are also independent of the exact value $t_2$ and have only restriction \eqref{eq:time1} which is defined by the length of time period $t_3-t_2$. Thus, $T^e[t_2, t_3]$ is time-homogeneous. 
    Finally, since for all $0<t'< t''$ all $T^e[t', t''], \forall e\in E$ are independent, then the lemma statement also holds for their junction $T[t', t'']$. 
\end{proof}

\begin{theorem}
    The process $\swap^{(k)}$ is Markovian and time-homogeneous. 
\end{theorem}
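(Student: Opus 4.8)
The goal is to show that $\swap^{(k)}=(Y^{(k)}_t)_{t\geq 0}$, defined by $Y^{(k)}_t=\gamma_t(Y^{(k)}_0)$ with $\gamma_t = f_{\eh_N}\circ\dots\circ f_{\eh_1}$, satisfies the Markov property and is time-homogeneous. The key observation is that $\gamma_t$ is built entirely from the ring times of the Poisson clocks on $G(0)$ up to time $t$, via Lemma~\ref{lemma:time-properties} (the alternative bijection defined only by clocks on $G(0)$). So the plan is to reduce statements about $\swap^{(k)}$ to statements about the increments of the clock-ring process $T[t_1,t_2]$, which were already shown in Lemma~\ref{lemma:time-properties} to be independent across disjoint intervals and time-homogeneous.

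First I would make precise the ``transfer'' step: for $0\le s\le t$, write $\gamma_t = \gamma_{s\to t}\circ\gamma_s$, where $\gamma_{s\to t}$ is the composition of the transposition maps $f_{\eh_m}$ for exactly those rings occurring in $(s,t]$. The point to verify is that although the edges $\eh_m$ for $m$ corresponding to rings in $(s,t]$ are defined through $\sigma^{-1}$ applied after all earlier rings, one can equivalently describe $\gamma_{s\to t}$ as a deterministic function of (i) the configuration $\gamma_s$ reached by time $s$ and (ii) the clock data $T[s,t]$ on the interval $(s,t]$ — this is essentially the content of the inductive construction in the proof of Lemma~\ref{MainBijection} restarted from time $s$. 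Call this function $\Phi$, so $\gamma_t = \Phi(\gamma_s, T[s,t])$ and hence $Y^{(k)}_t = \Phi(\gamma_s, T[s,t])(Y^{(k)}_0)$, which depends on the past only through $Y^{(k)}_s$ (more precisely through $\gamma_s$, and since $Y^{(k)}_0$ is fixed, through $Y^{(k)}_s$) and on the independent, time-homogeneous block $T[s,t]$.

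Then the Markov property follows: conditioning on $Y^{(k)}_{t_1},\dots,Y^{(k)}_{t_{m-1}}$ is the same as conditioning on $\gamma_{t_1},\dots,\gamma_{t_{m-1}}$, which are functions of $T[0,t_{m-1}]$; by Lemma~\ref{lemma:time-properties}, $T[t_{m-1},t_m]$ is independent of $T[0,t_{m-1}]$, so $\Pr[Y^{(k)}_{t_m}=\mathbf{y}\mid Y^{(k)}_{t_1},\dots,Y^{(k)}_{t_{m-1}}] = \Pr[Y^{(k)}_{t_m}=\mathbf{y}\mid Y^{(k)}_{t_{m-1}}]$. Time-homogeneity follows similarly: $\Pr[Y^{(k)}_{h+t}=\mathbf{y}\mid Y^{(k)}_h=\mathbf{x}]$ depends on $T[h,h+t]$ whose law, by the second bullet of Lemma~\ref{lemma:time-properties}, depends only on the interval length $t$ and not on $h$; combined with the fact that $\Phi$ itself does not reference absolute time, the conditional probability equals $\Pr[Y^{(k)}_{t}=\mathbf{y}\mid Y^{(k)}_0=\mathbf{x}]$. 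Finiteness of the state space $(V)_k$ is immediate.

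The main obstacle is the transfer step — carefully arguing that $\gamma_{s\to t}$ depends on the past only through the single permutation $\gamma_s$ and not through the full history of which edges rang when. The subtlety is that the edges $\eh_m$ in the definition of $\gamma_t$ are indexed relative to $G(0)$ through the cumulative map $\sigma^{-1}$, so one must check that re-expressing them relative to the ``intermediate'' graph state at time $s$ exactly reproduces the composition, i.e. that $\gamma_t\circ\gamma_s^{-1}$ is precisely the interchange map one would obtain by running the process on $G(0)$ but starting the clock-bookkeeping afresh at time $s$ with the relabelling $\gamma_s$. This is a bookkeeping identity that follows from the associativity of composition and the inductive identity in Lemma~\ref{MainBijection}, but it is the step that needs the most care to state cleanly; everything after it is a routine application of Lemma~\ref{lemma:time-properties}.
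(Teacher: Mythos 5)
Your overall skeleton matches the paper's: express the evolution over an interval as a composition of the transpositions driven by the clock rings, then invoke Lemma~\ref{lemma:time-properties} for independence and time-homogeneity of the clock increments. However, there is a genuine gap in how you pass from the permutation level to the $k$-particle process. Your transfer step only claims that the increment $\gamma_t\circ\gamma_s^{-1}$ is a deterministic function of the pair $(\gamma_s, T[s,t])$. The fact you actually need — and the very thing the $\gamma$-representation buys — is stronger: each clock $C_{\eh}$ is permanently attached to an edge $\eh\in E(0)$ and each ring post-composes $\gamma$ with $f_{\eh}$, so the increment $\gamma_t\circ\gamma_s^{-1}=f_{\eh_M}\circ\dots\circ f_{\eh_N}$ is determined by $T[s,t]$ \emph{alone}, with no reference to $\gamma_s$ or to the intermediate graphs. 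Consequently $Y^{(k)}_t=F\bigl(Y^{(k)}_s,T[s,t]\bigr)$ for a deterministic, coordinate-wise map $F$, which is exactly the form the paper's proof establishes. Relatedly, your Markov step asserts that conditioning on $Y^{(k)}_{t_1},\dots,Y^{(k)}_{t_{m-1}}$ is the same as conditioning on $\gamma_{t_1},\dots,\gamma_{t_{m-1}}$; this is false when $k<n$, since the $k$-tuple does not determine the full permutation (and your parenthetical ``since $Y^{(k)}_0$ is fixed, through $Y^{(k)}_s$'' does not repair this: knowing $Y^{(k)}_0$ and $Y^{(k)}_s$ still leaves $\gamma_s$ undetermined). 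With your weaker transfer step, the conditional law of $Y^{(k)}_{t_m}$ given the past is a priori a function of $\gamma_{t_{m-1}}$, so your argument only yields Markovianity of the full configuration process ($k=n$); since a function of a Markov chain need not be Markov, the case $k<n$ does not follow as written.

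Both issues disappear once the stronger transfer step is in place. One route: condition on the full clock history up to $t_{m-1}$, observe that $\Pr\bigl[F(Y^{(k)}_{t_{m-1}},T[t_{m-1},t_m])=\mathbf{y}\mid\cdot\bigr]$ is a function of $Y^{(k)}_{t_{m-1}}$ alone because $T[t_{m-1},t_m]$ is independent of that history (Lemma~\ref{lemma:time-properties}), and finish with the tower property. The paper takes a slightly different but equivalent route: it additionally uses that $F(\cdot,T[t_1,t_{m-1}])$ is invertible, so the earlier values $Y^{(k)}_{t_1},\dots,Y^{(k)}_{t_{m-2}}$ are reconstructed from $Y^{(k)}_{t_{m-1}}$ and $T[t_1,t_{m-1}]$, and the independence of the two clock blocks then gives the Markov property directly at the level of the $k$-tuples, never conditioning on the full permutation. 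Your time-homogeneity argument is fine once $F$ ignores $\gamma_s$: homogeneity of $T[s,t]$ transfers verbatim, exactly as in the paper.
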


\begin{proof}
Consider any $0<t'<t''$ and let $e_N, e_{N+1}, ... e_M$ be the list of edges on which Poisson clocks ringed at period $[t', t'']$. Then $\swap^{(k)}_{t''}=\sigma_{t''}(\swap^{(k)}_0)=f_{e_M}\circ ... \circ f_{e_N}\circ \swap^{(k)}_{t'}$ and also $\swap^{(k)}_{t'}=f^{-1}_{e_N}\circ ... \circ f^{-1}_{e_M}\circ \swap^{(k)}_{t''}$. Note, that we can derive sequence  $e_N, e_{N+1}, ... e_M$ from $T[t', t'']$. Thus, given $T[t', t'']$ we can uniquely define both $\swap^{(k)}_{t''}$ from $\swap^{(k)}_{t'}$  and vice versa, \ie, exists some function $F$ such that
\begin{align}\label{eq:mark:1}
    F(\swap^{(k)}_{t'},T[t', t'']) = \swap^{(k)}_{t''}, F^{-1}(\swap^{(k)}_{t''},T[t', t'']) = \swap^{(k)}_{t'}.
\end{align}

Using these results, we will prove the Markovian property
\begin{align}\label{eq:markovian}
    \Pr[\swap^{(k)}_{t_n}=s_n\mid \swap^{(k)}_{t_1}=s_1, ..., \swap^{(k)}_{t_{n-1}}=s_{n-1}]=
    \\\Pr[\swap^{(k)}_{t_n}=s_n\mid \swap^{(k)}_{t_{n-1}}=s_{n-1}]\nonumber
\end{align}
for all $s_1, ...s_k\in (V)_k$ and any sequence $0\leq t_1\leq t_2...\leq t_n$ of times. $\swap^{(k)}_{t_{n}}$ is uniquely defined by $\swap^{(k)}_{t_{n-1}}$ and $T[t_{n-1}, t_{n}]$, and $\swap^{(k)}_{t_{1}},...,  \swap^{(k)}_{t_{n-2}}$ are uniquely defined by $\swap^{(k)}_{t_{n-1}}$ and $T[t_1, t_{n-1}]$. Also, by Lemma~\ref{lemma:time-properties}, $T[t_1, t_{n-1}]$ and $T[t_{n-1}, t_{n}]$ are independent. Thus $(\swap^{(k)}_{t_{n}}\mid\swap^{(k)}_{t_{n-1}})$ is independent of $(\swap^{(k)}_{t_{1}}, ..., \swap^{(k)}_{t_{n-2}})$. This concludes \eqref{eq:markovian} and that the process is Markovian. 

Moreover, since $T$ is time-homogeneous and $T[t_{n-1}, t_n]$ together with value $Y^{(k)}_{t_{n-1}}$ fully defines value of $Y^{(k)}_{t_n}$, $Y^{(k)}$ is also time-homogeneous. Formally, we describe it using the function $F$ from \eqref{eq:mark:1}:
\begin{align*}
\Pr[\swap^{(k)}_{t_n}=s_n&\mid \swap^{(k)}_{t_{n-1}}=s_{n-1}]=\Pr[F(s_{n-1}, T[t_{n-1}, t_{n}])=s_{n}] \\ &=\Pr[F(s_{n-1}, T[0, t_{n}-t_{n-1}])=s_{n}] \\
&= \Pr[\swap^{(k)}_{t_n-t_{n-1}}=s_n\mid \swap^{(k)}_{0}=s_{n-1}],    
\end{align*}
which concludes the proof. 
\end{proof}

\SwapIsInterchange* 

\begin{proof}
    Since the process $\swap^{(k)}$ is a time-homogeneous ctMC, it can be uniquely defined by its infinitesimal generator. Thus, it is sufficient to compute $q({\mathbf{x}, \mathbf{y}})$ for any two $\mathbf{x}, \mathbf{y}\in (V)_k$ and show that $q({\mathbf{x}, \mathbf{y}})$ is identical to that of an interchange process (see \eqref{eq:interchange_transition}).  Since $q(\mathbf{x}, \mathbf{x})=-\sum\limits_{\mathbf{y}\neq \mathbf{x}}q(\mathbf{x}, \mathbf{y})$ for any $\mathbf{x} \in (V)_k$, below we only consider cases where $\mathbf{x}\neq \mathbf{y}$. By definition, for $\mathbf{x}\neq \mathbf{y}$ we have
   $ q(\mathbf{x}, \mathbf{y}) = 
    \lim\limits_{h\rightarrow 0+}\frac{P_{\mathbf{x}, \mathbf{y}}(h)}{h}$. Hence, we need to compute $P_{\mathbf{x}, \mathbf{y}}(h)$ and $\lim\limits_{h\rightarrow 0+}\frac{P_{\mathbf{x}, \mathbf{y}}(h)}{h}$. To do so, let us consider the following events:
\vspace{-5pt}
\begin{align*}
\Gamma_0 &= \{\text{No Poisson clock rings within $[0, h]$}\},\\
\Gamma_e &= \{\text{One Poisson clock rings within $[0, h]$, at $e\in E(0)$}\},\\
\Gamma_2 &= \{\text{Two or more Poisson clocks ring within $[0, h]$}\}.
\end{align*}
Since $\Gamma_0, \{\Gamma_e\}_{e\in E(0)}, \Gamma_2$ create the disjoint partition of the space, we can use the total probability formula:
\begin{align}\label{eq:prob_decomposition}
    P_{\mathbf{x}, \mathbf{y}}(h) = &~\Pr[Y^{(k)}_h=\mathbf{y}\mid Y^{(k)}_0=\mathbf{x}, \Gamma_0]\cdot \Pr[\Gamma_0] \\ 
    &+ \sum \limits_{e\in E}\left(\Pr[Y^{(k)}_h=\mathbf{y}\mid Y^{(k)}_0=\mathbf{x}, \Gamma_e]\cdot \Pr[\Gamma_{e}] \right)\nonumber\\
    &+ \Pr[Y^{(k)}_h=\mathbf{y}\mid Y^{(k)}_0=\mathbf{x}, \Gamma_2]\cdot \Pr[\Gamma_2] \nonumber
\end{align}
Now, we compute parts of this expression. We denote the total amount of edges to be $|E(0)|=\mathfrak{m}$ and $\e$ to be Euler's constant. Note that the Poisson clocks on each edge ring after time periods that have exponential distribution $\mathcal{E}(1)$ and that Poisson clocks are independent of each other. Thus, $\Pr[\Gamma_0]$ is the joint probability of all the Poisson clocks having their first ring after the time period $[0, h]$ and $\Pr[\Gamma_e]$ is the probability that only one of them (i.e, $e$) rings at least once before time $h$. Hence, we have for any $e \in E(0)$
\begin{align}
    &\Pr[\Gamma_0] = (\e^{-h})^{\mathfrak{m}}=\e^{-h\mathfrak{m}} \label{eq:proba_0}\\
    &\Pr[\Gamma_e] = (1-\e^{-h})\e^{-h(\mathfrak{m}-1)}. \label{eq:proba_e}
\end{align}
Furthermore, using Taylor series at $h\rightarrow 0+$, $\forall e \in E$
\begin{align*}
    \lim \limits_{h\rightarrow 0+ }\frac{\Pr[\Gamma_e]}{h}&=
    \lim \limits_{h\rightarrow 0+ }\tfrac{\sum\limits_{l\geq1}\left(-\frac{(-h)^l}{l!}\right)\e^{-h(\mathfrak{m}-1)}}{h}\\
    &=\lim \limits_{h\rightarrow 0+ }\sum\limits_{l\geq0}\left(\frac{(-h)^{l}}{(l+1)!}\right)\e^{-h(\mathfrak{m}-1)}=1.
\end{align*}

Now, we compute $\Pr[\Gamma_2]$:
\begin{align*}
   &\Pr[\Gamma_2] = 1- \Pr[\Gamma_0]-\mathfrak{m}\cdot\Pr[\Gamma_e] \\
   &=1-\e^{-h \mathfrak{m}}-\mathfrak{m}(1-\e^{-h})(\e^{-h(\mathfrak{m}-1)})\\
   &=(1-\e^{-h})\left(\sum\limits_{i\in\{0,\dots,  \mathfrak{m}-1\}} \hspace{-10 pt} \e^{-i\cdot h}-\mathfrak{m}\e^{-h(\mathfrak{m}-1)}\right).
\end{align*}
Again, by using Taylor series we can rewrite the limit $\lim\limits_{h\rightarrow 0+ } \Pr[\Gamma_2]/h$ as 
\begin{equation*}
    \lim \limits_{h\rightarrow 0+ }\sum\limits_{l\geq0}\frac{(-h)^{l}}{(l+1)!}\left(\sum\limits_{i\in\{0,\dots, \mathfrak{m}-1\}}\hspace{-10pt} \e^{-i\cdot h}-\mathfrak{m}\e^{-h(\mathfrak{m}-1)}\right).
\end{equation*}
Since the first term tends to $1$ when $h\rightarrow 0+$, and the second term tends to $0$, we get $\lim \limits_{h\rightarrow 0+ }\Pr[\Gamma_2]/h=0.$ 

To finish calculation of $q(\mathbf{x},\mathbf{y}), \forall \mathbf{x},\mathbf{y}\in (V)_k$, we study the two following cases. 

\paragraph{Case 1. $\mathbf{x}\neq \mathbf{y}$, and there exists an edge $e'$ such that $f_{e'}(\mathbf{x})=\mathbf{y}$}
Then by definition
\begin{align*}
    \Pr[Y^{(k)}_h=\mathbf{y}\mid Y^{(k)}_0=\mathbf{x}, \Gamma_e]=
    \begin{cases}
        0,&\text{if $e\neq e'$,}\\
        1,&\text{if $e=e'$. }
    \end{cases} 
\end{align*}
 Also, $\Pr[Y^{(k)}_h=\mathbf{y}\mid Y^{(k)}_0=\mathbf{x}, \Gamma_0]=0$ and $\Pr[Y^{(k)}_h=\mathbf{y}\mid Y^{(k)}_0=\mathbf{x}, \Gamma_2]\leq 1$. Substituting these and \eqref{eq:prob_decomposition} in the definition of $ q(\mathbf{x}, \mathbf{y})$, we obtain that in this case 
\begin{align*}
     q(\mathbf{x}, \mathbf{y}) &= \lim\limits_{h\rightarrow 0+}\frac{P_{\mathbf{x}, \mathbf{y}}(h)}{h}\\
     &=\lim\limits_{h\rightarrow 0+}\frac{\Pr[Y^{(k)}_h=\mathbf{y}\mid Y^{(k)}_0=\mathbf{x}, \Gamma_{e'}]\cdot \Pr[\Gamma_{e'}]}{h}=1.
\end{align*}

\paragraph{Case 2. $\mathbf{x}\neq \mathbf{y}$, and there is no edge $e'$ such that $f_{e'}(\mathbf{x})=\mathbf{y}$} 
In this case, for any $e\in E(0)$ we have \[\Pr[Y^{(k)}_h=\mathbf{y}\mid Y^{(k)}_0=\mathbf{x}, \Gamma_e]=0,\] 
\[\Pr[Y^{(k)}_h=\mathbf{y}\mid Y^{(k)}_0=\mathbf{x}, \Gamma_0]=0.\] Also, $\Pr[Y^{(k)}_h=\mathbf{y}\mid Y^{(k)}_0=\mathbf{x}, \Gamma_2]\leq 1$. Substituting these and \eqref{eq:prob_decomposition} in formula of $ q(\mathbf{x}, \mathbf{y})$, we get in this case  
\begin{align*}
     q(\mathbf{x}, \mathbf{y}) = \lim\limits_{h\rightarrow 0+}\frac{P_{\mathbf{x}, \mathbf{y}}( h)}{h}=0.
\end{align*}
These two cases show that the infinitesimal generator of the process $\swap^{(k)}$ and that of an interchange process  $IP(k, G(0))$(as defined in \eqref{eq:interchange_transition}) are the same.

\end{proof}

\subsection{Proofs of Section~\ref{section:mixing-time}}
\label{appendix:proofs2}

\FirstConvergenceTheorem*

\begin{proof}
For the sake of brevity, we prove the theorem statement for the default value of Poisson clocks rate $\alpha=1$ and in the end we just scale the final time value by $\alpha$. 

Let us consider any peer $u_i\in U$ and an ordered tuple of $d$ peers $\mathbf{u}=(u_{i_1}, u_{i_2}, \dots , u_{i_d}) \in (U\backslash\{u_i\})_d$. Notice that  $Sample_i(d, T)$ is a random of $d!$ orderings of the neighborhood of peer $u_i$ 
in graph $G(T)$. Therefore it holds true that:
\begin{equation} \label{eq:f0}
\Pr[Sample_i(d, T)=\mathbf{u}] = \frac{1}{d!}\Pr[N(u_i, T)=\{u_{i_1}, u_{i_2}, \dots , u_{i_d}\}].
\end{equation}

  Consider an interchange process $Y^{(d+1)}=IP(d+1, G(0))$ with $Y^{(d+1)}_0=(i, i_1, i_2, \dots, i_d)$, the initial positions of peers $(u_i, u_{i_1}, \dots, u_{i_d})$ in $G(0)$. For the sake of brevity we remove upper index in $Y^{(d+1)}$ and refer to this process $Y=(Y_t)_{t\geq 0}$.  We recall that set of states of $Y$ is $(V)_{d+1}$. Then, thanks to Theorem \ref{thm:swap-is-interchange}, characterizing the connections between $(u_i, u_{i_1}, \dots, u_{i_d})$ in $(G(t))_{t\geq 0}$ (when studying the \sys process\footnote{Note that, when studying \sys each peer is associated to a single node. Hence studying the connections between peers simply means tracking $E(t)$.}) is equivalent to characterizing the connections (according to $G(0)$) between the nodes in the states of $Y$. Thus, 
\begin{equation} \label{eq:f1}
\Pr[N(u_i, T)=\{u_{i_1}, u_{i_2}, \dots , u_{i_d}\}] = \hspace{-25pt}\sum\limits_{\substack{\mathbf{v}\in (V)_{d+1}:\\N(\mathbf{v}[1], 0) = \{\mathbf{v}[2], \dots, \mathbf{v}[d+1]\}}}\hspace{-25pt}\Pr[Y_T = \mathbf{v}].
\end{equation}
The sum on the right calculates the probability that particles of process $Y$ at time $T$ are positioned on set of nodes where the first node has others as a neighborhood.  We will estimate this sum now. Since the graph $G(0)$ is connected, there exists a unique stationary distribution $\pi$ of the process $Y$ ~\cite{levin2017markov}[Example 1.12.]. Then by definition of $\varepsilon$-mixing time for $T=T_Y(\varepsilon)$, the following holds: 
\begin{align*}
   2\varepsilon &\geq 2 d_{TV}(P_{Y_0,\cdot }(T), \pi)
   = \sum\limits_{\mathbf{v}\in (V)_{d+1}}\hspace{-13pt}\mid\Pr \left[Y_T = \mathbf{v} \right]
    -\pi\left (\mathbf{v}\right ) \mid
\end{align*}
Since this sum is bigger than its part, and the  sum of modulos is bigger then the modulo of the sum, this gives us
\begin{align}\label{eq:f2}
    2\varepsilon \geq \left|\sum\limits_{\substack{\mathbf{v}\in (V)_{d+1}:\\N(\mathbf{v}[1], 0) = \{\mathbf{v}[2], \dots, \mathbf{v}[d+1]\}}}\hspace{-30pt}\left( \Pr[Y_T=\mathbf{v}] - \pi(\mathbf{v}) \right) \right|
\end{align}
For the sake of brevity we denote $\rho := \frac{1}{d!}\hspace{-20pt}\sum\limits_{\substack{\mathbf{v}\in (V)_{d+1}:\\N(\mathbf{v}[1], 0) = \{\mathbf{v}[2], \dots, \mathbf{v}[d+1]\}}} \hspace{-20pt}\pi( \mathbf{v})$, and apply \eqref{eq:f0} and  \eqref{eq:f1} to the result of \eqref{eq:f2} to get
\begin{align}\label{eq:f3}
     2\varepsilon\geq \frac{2\varepsilon}{d!}\geq\left|\Pr[Sample_i(d, T)=\mathbf{u}] - \rho\right|.d
\end{align}

Now, let us consider any other ordered tuple with distinct peers $\mathbf{u'}=(u_{j_1}, u_{j_2}, \dots, u_{j_d}) \in (U\backslash\{u_i\})_d$ who can be neighbors of $u_i$. We define $Y'$ an interchange process $IP(d+1, G(0))$ with initial state $Y'_0 =(i, j_1, j_2, \dots, j_d)$,  that describes the dynamic of $(u_i, u_{j_1}, u_{j_2}, \dots , u_{j_d})$. The stationary distribution and $\varepsilon$-mixing time of an interchange process $IP(d+1, G(0))$ are independent from the initial state, and these values are the same for $Y'$  and $Y$. Since \eqref{eq:f3} holds for any $\mathbf{u}\in (U\backslash\{u_i\})_d$, it also applies to $\mathbf{u'}$ with the same value $\rho$ and $T$. We use \eqref{eq:f3} and triangle inequality to bound difference between probabilities of samples that can be provided by \sys to peer $u_i$: 
\begin{align*} 
    &\mid \Pr[Sample_i(d, T)=\mathbf{u'}]- \Pr[Sample_i(d, T)=\mathbf{u}]\mid\leq \\
    & \leq \mid\Pr[Sample_i(d, T)=\mathbf{u'}] -\rho \mid+\\
     &+ \mid\Pr[Sample_i(d, T)=\mathbf{u}] - \rho \mid
     \leq 4\varepsilon. 
\end{align*}

We know that $ \sum\limits_{\mathbf{u}\in (U\backslash\{u_i\})_d} \Pr[Sample_i(d, T)=\mathbf{u}] = 1$. This sum has $\frac{(n-1)!}{(n-d-1)!}$ summands - the number of ways to choose an ordered tuple of $d$ elements out of $n-1$. No two summands can differ by more than $4\varepsilon$. Thus
\begin{align*}
    &\left|\Pr[Sample_i(d, T)=\mathbf{u}]- \frac{(n-d-1)!}{(n-1)!}\right| \leq 4\varepsilon.
\end{align*}

\end{proof}

\FinalConvergenceTheorem*
\begin{proof}
For the sake of brevity, we prove the theorem statement for the default value of Poisson clocks rate $\alpha=1$ and in the end we just scale the final time value by $\alpha$. 

  We consider $T=T_{Y^{(d+1)}}(\varepsilon')$ with $Y^{(d+1)}$ a $IP(d+1, G(0))$ where $\varepsilon'=\frac{\delta}{n^d\cdot 2}$. We apply Theorem \ref{thm:main-1} to each summand of the total variation distance:
    \begin{align*}
        d_{TV}(L(Sample_i(d, T)),Uniform((U\backslash\{u_i\})_d))=\\
      \frac{1}{2}\sum\limits_{\mathbf{u}\in (U\backslash\{u_i\})_d} \hspace{-15pt}|\Pr[Sample_i(d, T)=\mathbf{u}]-\frac{(n-d-1)!}{(n-1)!}|\leq 
      n^d\cdot 2\varepsilon'=\delta. 
    \end{align*}    
Now, to finish the statement of the theorem, we need to estimate $T$. From Theorem \ref{thm:interchange-rw} there exists a universal constant $C>0$ such that for all $\varepsilon\in (0, 1/2)$
    \[T_{IP(d+1, G(0))}(\varepsilon) \leq C\log\left({\frac{n}{\varepsilon}}\right) T_{RW(G(0))}(1/4).\] Using Lemma \ref{lemma:rw-mixing-time}, we bound the mixing time of a random walk:
    \[T_{RW(G(0))}(1/4)\leq \frac{\log(4n)}{d \lambda(G(0))}.\]  
     Therefore, for any $\varepsilon\in(0, 1/2)$
     \[T_{IP(d+1, G(0))}(\varepsilon) \leq C\log\left({\frac{n}{\varepsilon}}\right) \frac{\log(4n)}{d \lambda(G(0))}.\]
Thus, for $\varepsilon'=\frac{\delta}{n^d\cdot 2}$ the following holds
\begin{align*}
    T\leq C\log\left({\frac{n}{\varepsilon'}}\right) \frac{\log(4n)}{d\lambda(G(0))}=
    C\log\left(\frac{2n^{d+1}}{\delta}\right) \frac{\log(4n)}{d\lambda(G(0))},
\end{align*}
which concludes the proof. 
\end{proof}

\section{Decentralized Poisson Clocks}
\label{app:decentralized_poisson_clocks}

In the algorithm description provided in Section~\ref{section:algorithm}, we have assumed that peers have access to a common Poisson clock.
We now explain how one can implement a decentralized Poisson clock.
In practical systems, random values are generated by a pseudorandom number generator (PRNG) defined by some random value distribution and are optionally initialized with a random seed.
We ensure that both peers at the endpoints of a particular edge $ e $ construct a common Poisson clock for $ e $ by initializing a PRNG with a common seed.
We achieve this by having adjacent peers exchange random seeds when the network is initialized.
We assume that all peers in the network are using the same underlying PRNG implementations.

\begin{figure}[t]
\centering
\includegraphics[width=.7\linewidth]{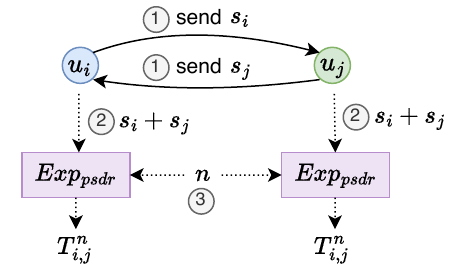}
\caption{\sys realizes a decentralized Poisson clock by having adjacent peers $ u_i $ and $ u_j $ initialize $ Exp_{psdr} $ with a common seed $ s_i + s_j $. Both peers then are able to derive the same list of ring times for the Poisson clock on the edge between $ u_i $ and $ u_j $.}
\label{fig:decentralized_poisson_clock}
\end{figure}

Concretely, our implementation of Poisson clocks without centralized control consists of the following three steps, which are also visualized in Figure~\ref{fig:decentralized_poisson_clock}:
\begin{itemize}
    \item \textbf{Step 1:} When the network is initialized, each peer $u_i$ creates a random seed $s_i$ and sends it to all peers whose identifiers $u_i$ knows $(u_j)_{j\in N(i, 0)}$. Then, $u_j$ awaits the random seeds of these peers (step 1 in Figure~\ref{fig:decentralized_poisson_clock}).
    When $ u_i $ receives a seed $s_j$ of some $u_j, j\in N(i, 0)$, then $u_i$ creates a common seed $s_{i, j}=s_i+s_j$.
    By construction, peer $u_j$ creates the same common seed $s_{j, i}=s_i+s_j$.
    \item \textbf{Step 2:} Peer $ u_i $ then uses the common seed $s_{i, j} $ to initialize a PRNG for values from an exponential distribution with rate $\alpha$ (see definition in Section ~\ref{sec:peer-sampling_systems_model}). We denote this PRNG as $ Exp_{psdr}(\alpha, s_{i, j}) $. Peer $ u_j $ does the same (step 2 in Figure~\ref{fig:decentralized_poisson_clock}).
    \item \textbf{Step 3:} Now that both peers have access to a common PRNG $ Exp_{psdr} $, they are able to generate a list of clock ringing times. By feeding index $ n $ to $ Exp_{psdr} $, they can generate the $n^{th}$ element in this list (step 3 in Figure~\ref{fig:decentralized_poisson_clock}).
    Peers $ u_i $ and $ u_j $ thus can generate equal values of $T_{i, j}^n $, which will be the time between the $n-1^{th}$ and $n^{th}$ ring of the Poisson clock on the edge between $ u_i $ and $ u_j $.
\end{itemize}

By continuously incrementing $ n $, $ Exp_{psdr} $ enables any peer $ u_i $ and $ u_j $ in the network that have access to the same $ Exp_{psdr} $ to generate an infinite, common sequence of clock ringing times. 
The average time period between two sequential rings is $ \alpha$.

Importantly, in the beginning, \sys assumes that the network agrees on a common starting epoch.
This could, for example, be an Unix epoch time and correspond to the first day that a particular application is deployed.
Afterward, peers should periodically invoke the constructed $ Exp_{psdr} $ and keep track of clock rings, relative to the starting epoch, in order to predict when the next clock ring occurs.

\end{document}